\newtheorem{theorem}{Theorem}[section]
\newtheorem{proposition}[theorem]{Proposition}
\newtheorem{lemma}[theorem]{Lemma}
\theoremstyle{definition}
\newtheorem{example}[theorem]{Example}
\theoremstyle{remark}
\newtheorem*{remark}{Remark}
\numberwithin{equation}{section}
\newcommand{\pqbin}[2]{\genfrac{(}{)}{0pt}{0}{#1}{#2}_{\!\!p,q}}
\newcommand{\pqbint}[2]{\genfrac{(}{)}{0pt}{2}{#1}{#2}_{p,q}}
\newcommand{\dd}{\,\mathrm{d}}
\begin{document}

\author{P. {\AA}hag} \address{Department
  of Mathematics and Mathematical Statistics, Ume{\aa} University,
  SE-901 87 Ume{\aa}, Sweden}
 \email{per.ahag@umu.se}

\author{R. Czy{\.z}}\address{Faculty
  of Mathematics and Computer Science, Jagiellonian University, \L
  ojasiewicza 6, 30-348 Krak\'ow, Poland}
 \email{rafal.czyz@im.uj.edu.pl}

\author{P. H. Lundow}
\address{Department of Mathematics and Mathematical Statistics,
  Ume{\aa} University, SE-901 87 Ume{\aa}, Sweden}
\email{per.hakan.lundow@math.umu.se}

\keywords{Generalization of Lambert W function, Lenz-Ising model, magnetization distribution, $p,q$-binomial coefficients, special functions}
\subjclass[2020]{Primary 33B99, 33F05, 82B20; Secondary 05A10, 05A30}

\title[On a generalised Lambert $W$ branch transition function]{On a
  generalised Lambert $W$ branch transition function arising from
  $p,q$-binomial coefficients} \date{\today}

\begin{abstract}
  With only a complete solution in dimension one and partially solved
  in dimension two, the Lenz-Ising model of magnetism is one of the
  most studied models in theoretical physics. An approach to solving
  this model in the high-dimensional case ($d>4$) is by modelling the
  magnetisation distribution with $p,q$-binomial coefficients. The
  connection between the parameters $p,q$ and the distribution peaks
  is obtained with a transition function $\omega$ which generalises
  the mapping of Lambert $W$ function branches $W_0$ and $W_{-1}$ to
  each other.  We give explicit formulas for the branches for special
  cases. Furthermore, we find derivatives, integrals,
  parametrizations, series expansions, and asymptotic behaviors.
\end{abstract}

\maketitle

\section{Introduction and background}\label{section1}

We will study the transition function $\omega$ that arises when using
$p,q$-binomial coefficients to attack the model of magnetism in
statistical mechanics for higher dimensions. To get an overall view of
our study, we start in Section~\ref{section1_PO} with a physical
background on the Lenz-Ising model. In Section~\ref{section1_PQcoeff}
we give an introduction to $p,q$-binomial coefficients, and how it
relates to the given physical problem. Section~\ref{section1_PQcoeff}
will also reveal how the transition function arises and it's relation
to the Lambert $W$ function.

\subsection{Physical origins}\label{section1_PO}
The Lenz-Ising model, usually referred to as the Ising model, was
introduced by Wilhelm Lenz~\cite{Lenz:1920} in 1920 as a simple model
for magnetism. The time was ripe for such a model after the
discoveries by Pierre Curie\footnote{In his thesis 1895. He and his
wife Marie later shared the 1903 Nobel prize in physics with Becquerel
for their work on radioactivity.} that magnetic materials (iron,
cobalt, nickel, etc.) undergo a phase transition at a critical
temperature $T_c$, the Curie temperature, above which they lose their
permanent magnetic properties. In the Ising model without an external
field, the material is described as a system of interacting particles,
each having spin $\pm 1$, and is governed by the
temperature~\cite{Cipra:1987,DuminilCopin:2022,Okounkov:2022}. The
system in question can be any finite graph, with the vertices
corresponding to the particles and the edges indicating which
particles interact. In the model's most famous version, the system is
the (infinite) $d$-dimensional integer lattice. The $1$-dimensional
model (an infinite chain of particles) was solved in 1925 by Lenz's
student, Ernst Ising, in his thesis~\cite{Ising:1925}. Unfortunately,
the result was disappointing since he showed that this system does not
have a phase transition at any positive temperature, the critical
temperature being $T_c=0$.

It took until 1944 when the Norwegian chemist Lars
Onsager\footnote{Nobel prize in chemistry 1968 for his work on the
thermodynamics of irreversible processes.} solved the model for
2-dimensional systems~\cite{Onsager:1944}. This solution was
considered a major breakthrough, rendering a positive critical
temperature, $T_c=2/\log(1+\sqrt{2})$ (assuming unit interaction
between nearest-neighbor particles in the lattice). Unfortunately,
again, the techniques used by Onsager did not point to a solution for
the 3-dimensional model. In fact, to this day, very little is known
rigorously about the 3-dimensional model, and the 2-dimensional model
with an external field is still unsolved. For dimensions $d\ge 4$ the
critical exponents, which govern the behaviour of various quantities
near the critical point, are known exactly but the critical
temperature is still not known exactly for any $d>2$.

This has not made the Ising model any less attractive, instead, it has
generated a staggering number of papers studying many variants of the
model in different
dimensions~\cite{Stanley:1968,Baxter:1973,SherringtonKirkpatrick:1975,EdwardsAnderson:1975}. Also
it has become a testing bed for various Monte Carlo simulation
algorithms~\cite{Wolff:2001,WangLandau:2001,FerrenbergXuLandau:2018}.

Let us focus a little closer on an interesting aspect of the
Ising model without an external field, namely its magnetization distribution. 
The magnetization $M$ of the Ising model is the sum of all the spins $S_i=\pm 1$ in the system,
$M=\sum_iS_i$, so that for a system with $n$ spins we have $-n\le M\le
n$.  For a finite $d$-dimensional system (we assume $d\ge 2$), at any
given temperature $T$, the magnetization $M$ obeys a symmetric
distribution with peaks at $\pm a n$, for some $0\le a\le 1$. Now, for
temperatures above the Curie temperature ($T>T_c$), the distribution
of magnetizations is unimodal with its peak at $M=0$ so that
$a=0$. When we lower the temperature below the Curie temperature
($T<T_c$) the distribution becomes bimodal with peaks at $\pm a n$ for
$0<a\le 1$, with $a\to 1$ as $T\to 0$. The parameter $a$ here
corresponds to the so-called spontaneous magnetization of the Ising
model, but its relation to $T$ is only known exactly for
$2$-dimensional (infinite) systems. It was conjectured by Onsager in
1948 and proved by Yang in 1952 (see~\cite{McCoyWu:1973} for
details) that in this case
\[a(T) = \begin{cases}
  (1 - \sinh(2/T)^{-4})^{1/8}&\text{if } T<T_c\\
  0&\text{if } T\ge T_c
\end{cases}
\]
where $T_c=2/\log(1+\sqrt{2})$.  Thus a phase transition occurs at
$T=T_c$ when $a$ becomes positive. In physical terms, the unimodal
distribution corresponds to losing the magnetic properties, whereas
the bimodal distribution corresponds to retaining them. Finding a
corresponding formula for dimensions $d\ge 3$ would be a major
breakthrough.

\subsection{The $p,q$-binomial coefficients}\label{section1_PQcoeff}
It was suggested in~\cite{LundowRosengren:2010} that the magnetization
distribution is well described by $p,q$-binomial coefficients for
finite high-dimensional systems ($d\ge 5$). In fact, in a special case
(mean-field), they are equivalent, and for $d\ge 5$ they have the same
asymptotic shape when $n\to\infty$. In principle, one could then model
the magnetization distribution for a finite system at temperature $T$
with a $p,q$-binomial distribution where $p$ and $q$ depend on $T$
according to some functions $p(T)$ and $q(T)$.  It would be a more
realistic project to find these functions for temperatures very close
to $T_c$ and one such attempt was made in~\cite{LundowRosengren:2013}.

However, it should be mentioned that the $p,q$-binomial coefficients
have received attention also for other interesting purely mathematical
properties~\cite{AhmiaBelbachir:2018,AhmiaBelbachir:2019,Corcino:2015}.

Let us here provide some more detail on these coefficients. The
$p,q$-binomial coefficient~\cite{Corcino:2008} is defined for $p\neq
q$ as
\begin{equation*}\label{pqbin}
  \pqbin{n}{k} = \prod_{j=1}^k\frac{p^{n-k+j} - q^{n-k+j}}{p^j - q^j},\quad 0\le k\le n
\end{equation*}
from which a symmetric $p,q$-binomial
distribution~\cite{LundowRosengren:2010,LundowRosengren:2013} is
defined to have a probability mass function proportional to the
sequence of these coefficients. Note that the sum of the
$p,q$-binomial coefficients do not seem to have a simple expression,
as opposed to standard binomial coefficients for which the sum is
simply $2^n$. For $p,q>0$ it has been shown~\cite{SuWang:2012} that
the coefficients either form a unimodal sequence with maximum at
$k=\lfloor n/2\rfloor$, or, a bimodal sequence with maxima at $k$ and
$n-k$ for some $0\le k\le n/2$.  We let the parameter $a$ control the
location of the sequence maximum by defining
\begin{equation*}
  k = \frac{n}{2}(1-a),\quad 0 \le a \le 1.
\end{equation*}
Having two consecutive $p,q$-binomial coefficients, indexed $k-1$ and
$k$, being equal leads, after simplification, to the equation
\begin{equation}\label{eqratio1}
  \frac{p^{n-k+1} - p^k}{q^{n-k+1} - q^k} = 1,\quad 1\le k\le \lfloor n/2\rfloor.
\end{equation}
We here need to introduce the $p,q$-parameterizations
\begin{align*}
  &p = 1 + \frac{2y}{n}\\
  &q = 1 + \frac{2z}{n},\quad z<0
\end{align*}
though other parameterizations are also of interest.  With $a$ and $z$
fixed we now solve for $y$.

\subsection{Introducing the functions $\omega$ and $\psi$}\label{section1_fcn}
First the special case $a=0$. The
asymptotic form of the ratio in~\eqref{eqratio1} now becomes
simply
\begin{equation*}\label{eqratio2}
  \frac{y e^y}{z e^z} + \mathcal{O}(1/n) = 1
\end{equation*}
and to receive a leading term of $1$ we must have $ze^z=ye^y$. This
defines a special case of the transition function $\omega$:
\begin{equation}\label{eqomega0}
  y = \omega(0,z) = \begin{cases}
    W_0(ze^z)&\text{if } z<-1\\
    W_{-1}(ze^z)&\text{ if } -1\le z<0\\
  \end{cases}.
\end{equation}
Here $W$ denotes the famous Lambert $W$ function, which returns one of
two real solutions $w=W_i(x)$ of the equation $x=w e^w$. The principal
solution, $w=W_0(x)$ defined for $x\ge -1/e$, gives $w\ge -1$ and the
other branch, $w=W_{-1}(x)$ defined for $-1/e \le x < 0$, gives $w<-1$.
The {\it transition function} $\omega(0,z)$ thus maps solutions
between the two branches of the Lambert $W$ function. The coefficient
sequence then will have its two maxima at points $k=n/2 \pm \ell$
where $\ell = o(n)$. In Fig.~\ref{fig0} the right panel shows an
example of this case and the left panel shows the case $a=1/2$.

\begin{figure}
  \begin{minipage}{0.45\textwidth}
    \begin{center}
      \includegraphics[width=\textwidth]{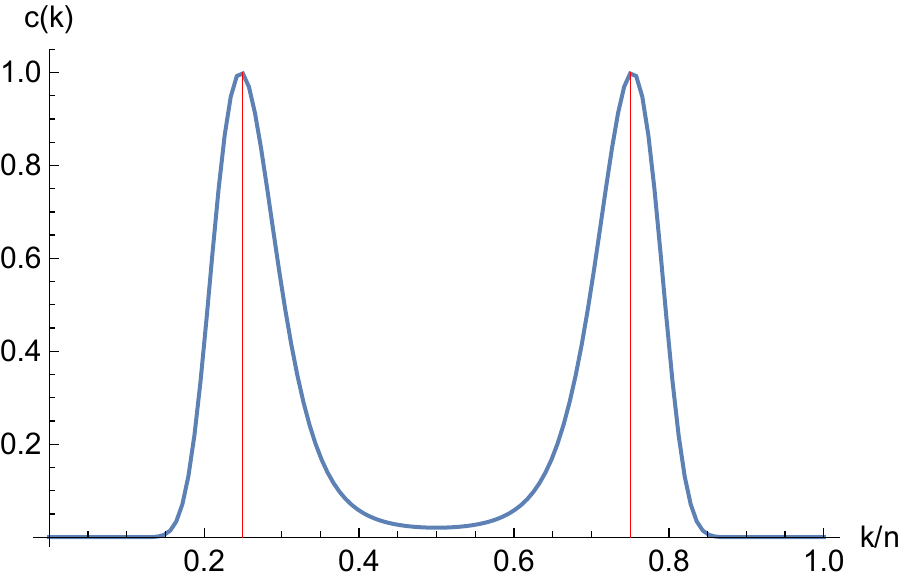}
    \end{center}
  \end{minipage}%
  \begin{minipage}{0.45\textwidth}
    \begin{center}
      \includegraphics[width=\textwidth]{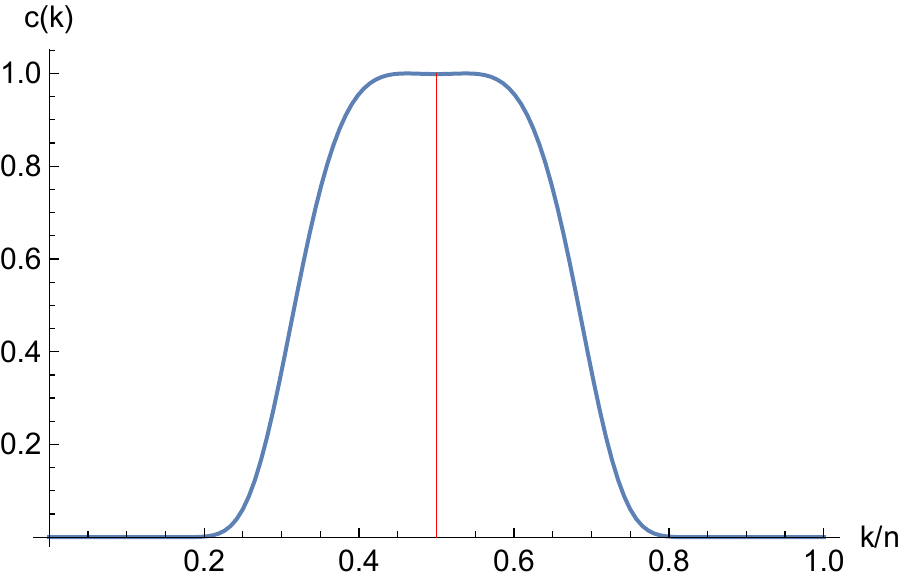}
    \end{center}
  \end{minipage}
  \caption{\label{fig0} Plots of $c(k)=\tfrac{1}{C}\pqbint{n}{k}$ vs
    $k/n$ for $n=2^{10}$, $z=-5$ and $y=\omega(a,z)$, where $C$ is a
    normalizing factor. Red vertical lines at $k/n=(1\pm a)/2$.
    Panels show the cases $a=1/2$ with $y=-0.0891004$ (left) and $a=0$
    with $y=-0.0348858$ (right).  As $n\to\infty$ the peak locations
    will approach the position of the vertical lines. Also, the
    coefficient mass will become more concentrated around these
    lines.}
\end{figure}

We will now generalise the transition function of $W$ to the case of
$0<a<1$. The ratio of~\eqref{eqratio1} now becomes
\begin{equation*}\label{eqratio3}
  \frac{\sinh(ay)e^y}{\sinh(az)e^z} + \mathcal{O}(1/n) = 1
\end{equation*}
so to receive the correct leading term of $1$ we must now solve
\begin{equation}\label{eqbranches}
  \sinh(ay)e^y = \sinh(az)e^z.
\end{equation}
We therefore define the function
\begin{equation*}
  f(a,w) = \sinh(aw)e^w
\end{equation*}
and introduce its inverse function $w=\psi_i(a,x)$ as one of the two
real solutions of
\begin{equation*}
  x=f(a,w),\quad 0 < a < 1.
\end{equation*}
We note that $f(a,w)$ takes its minimum value $L_a$ at $w=M_a$, where
\begin{align*}
  L_a &= \frac{-a}{\sqrt{1-a^2}}\left(\frac{1-a}{1+a}\right)^\frac{1}{2a},\\
  M_a &= \frac{1}{2a}\log\left(\frac{1-a}{1+a}\right).
\end{align*}
The principal branch $\psi_0(a,x)\ge M_a$ is now defined for $x\ge
L_a$, while the other branch $\psi_{-1}(a,x)\le M_a$ is defined for
$L_a\le x<0$.  Thus $M_a$ is the branch separator with $f(a,M_a)=L_a$
and $\psi_0(a,L_a)=\psi_{-1}(a,L_a)=M_a$. In Fig.~\ref{fig1} we show
how the equation $f(a,w)=x$ relates to the definition $\psi$ and
$\omega$.

\begin{figure}
  \begin{center}
    \includegraphics[width=0.9\textwidth]{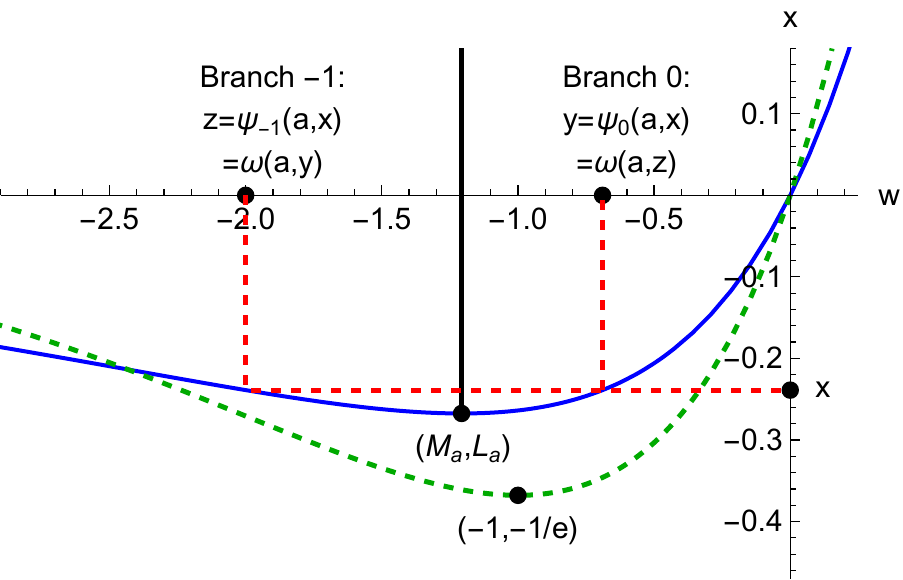}
  \end{center}
  \caption{\label{fig1} Plots of $we^w$ (dashed green) and
    $\sinh(aw)e^w$ (solid blue) for $a=2/3$. A point $z$ in one branch
    is mapped to a point $y=\omega(a,z)$ in the other branch when
    $z<0$, the two branches separated by $x=M_a$. Note
    $x=\sinh(ay)e^y=\sinh(az)e^z$ where $L_a\le x < 0$.}
\end{figure}

The transition function $\omega$, defined above only for $a=0$
in~\eqref{eqomega0}, can now be generalized to $0<a<1$
\begin{equation}\label{eqomega1}
  y = \omega(a,z) = \begin{cases}
    \psi_0(a,f(a,z))&\text{if } z<M_a\\
    \psi_{-1}(a,f(a,z))&\text{ if } M_a\le z<0\\
  \end{cases}.
\end{equation}
Plots of $\psi$ and $\omega$ are shown in Fig.~\ref{fig23}.

\begin{figure}
  \begin{minipage}{0.45\textwidth}
    \begin{center}
      \includegraphics[width=\textwidth]{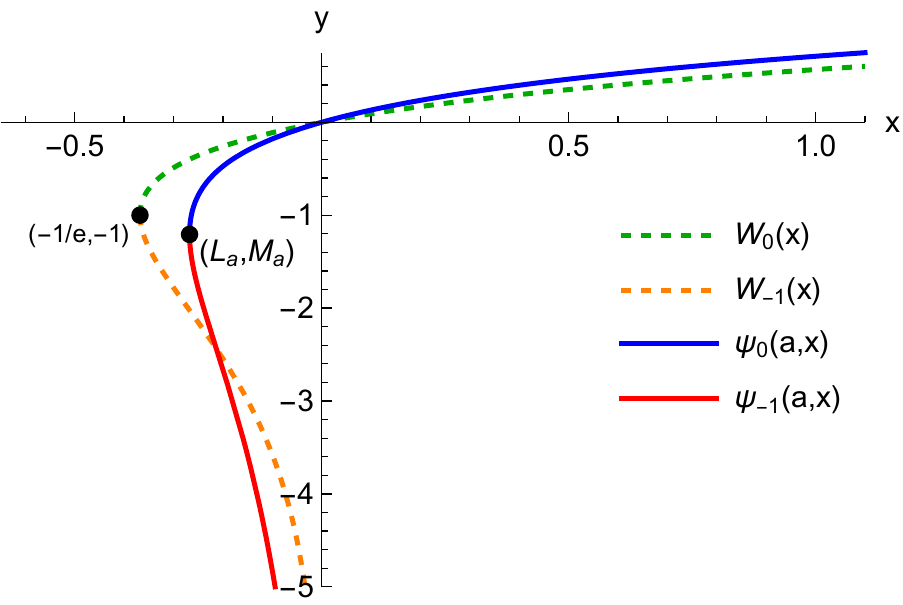}
    \end{center}
  \end{minipage}%
  \begin{minipage}{0.45\textwidth}
    \begin{center}
      \includegraphics[width=\textwidth]{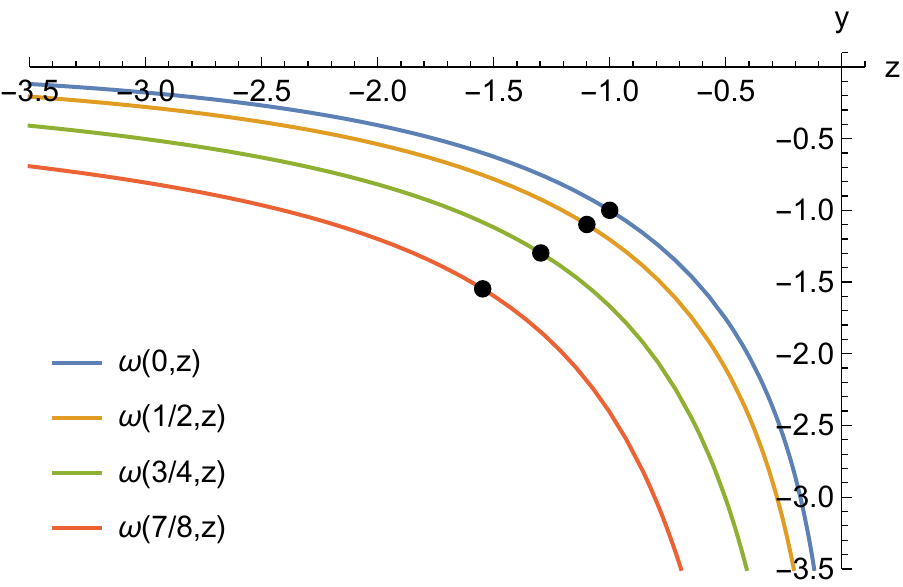}
    \end{center}
  \end{minipage}
  \caption{\label{fig23} Left: Plots of branch $0$ (above the points)
    and branch $-1$ (below the points) of $y=W(x)$ (dashed, green and
    orange) and $y=\psi(2/3,x)$ (solid, blue and red). Right: Plots of
    $\omega(a,z)$ for $a=0$, $1/2$, $3/4$, $7/8$ (downwards). Points
    at $(M_a,M_a)$. }
\end{figure}

Note that $\lim_{a\to 0^+} \omega(a,z)=\omega(0,z)$ so that $\omega$
for $0<a<1$ is indeed a natural and continuous generalisation of the
transition function for Lambert $W$ of~\eqref{eqomega0}.
However, the function $\psi$ is only related to the Lambert $W$ for small
$a$. In that case we have $f(a,w)\sim awe^w$ so that
\begin{equation*}
  \psi(a,x)\sim W(x/a),\quad x> L_a
\end{equation*}
for both branches, where the functions are defined.
Let us also briefly state the following properties of $\omega$.
\begin{enumerate}\itemsep2mm
\item $z=M_a$ is a fixed point, i.e., $\omega(a,M_a)=M_a$.
\item $\omega$ is an involution, i.e., $\omega(a,\omega(a,z)) = z$ for all $z<0$.
\item $\omega(a,z)$ is negative, decreasing and concave for all $z<0$.
\item $\lim_{z\to -\infty}\omega(a,z)=0$ and $\lim_{z\to 0^-}\omega(a,z)=-\infty$.
\end{enumerate}

The transition function $\omega$ was used in~\cite{LundowRosengren:2010}, though in a slightly different
guise, and very little was mentioned with regard to its mathematical
properties. This article aims to remedy this by first investigating
the properties of $\psi$ and apply this to $\omega$.

To conclude this section, the $p,q$-binomial coefficient sequence will
then have its peaks at $k=n/2(1\pm a) \pm \ell$ where $\ell=o(n)$ and
the parameter $z$ controls the mass shape of the sequence, see
Fig.~\ref{fig0}. Note that the peaks are never at exactly $k=n/2(1\pm
a)$.  To obtain this the resulting $y$ would also depend on $n$ and a
special function $\bar\omega(n,a,z)$ is needed. We would then have
$\bar\omega(n,a,z)\to \omega(a,z)$ when $n\to\infty$.

\section{Results on the $\psi$- and $\omega$-functions from elementary techniques}\label{section3}

In this section we will show some basic properties of the $\psi$ and
$\omega$-function obtained only from elementary techniques. Also we
solve some interesting special cases. Later in the article we will use
more powerful methods and extend these results considerably.

\subsection{The boundary cases $a=0$ and $a=1$ and a lower bound for $\psi_0$}

First we recall that
\begin{equation}
  f(a,w) = \sinh(aw)e^w = \frac{1}{2}\left(e^{w(1+a)} -
  e^{w(1-a)}\right), \quad 0\le a\le 1,\,w\in\mathbb{R}.
\end{equation}
The limits for $a\to 0^+$ and $a\to 1^-$ when $w$ is fixed are then
\begin{align*}
  &f(0,w) = \lim_{a\to 0^+} f(a,w) =  0,\\
  &f(1,w) = \lim_{a\to 1^-} f(a,w) = \frac{1}{2}\left(e^{2w}-1\right).
\end{align*}
For fixed $w$ we also have:
\begin{enumerate}
  \itemsep 1ex
\item If $w$ is positive then $f(a,w)$ is increasing for $0<a<1$.
\item If $w$ is negative then $f(a,w)$ is decreasing for $0<a<1$.
\end{enumerate}

Now we can define the limit forms of $\psi$.  Solving for $w$ in
$f(1,w)=x$ gives the limit inverse function
\begin{equation*}
  \psi_0(1,x) = \lim_{a\to 1^-}\psi_0(a,x) = \frac{1}{2}\log(1+2x),\quad x>-\frac{1}{2},
\end{equation*}
thanks to uniform convergence. In fact, since $f(a,w)$ is increasing
with respect to $a$ then $\psi_0(a,x)$ is decreasing with $a$ so that
we receive the inequality
\begin{equation*}
  \psi_0(a,x) \ge \frac{1}{2}\log(1+2x),\quad 0<a<1,\,x>-\frac{1}{2}.
\end{equation*}
In Theorem~\ref{thmpsi0} below we will give a much sharper bound, but
requiring considerably more work. Since $f(1,w)>-1/2$ is strictly
increasing for all $w$, thus having no second branch, we obtain no
limit $\psi_{-1}(1,x)$.  Also, since $f(0,w)=0$ we have no
well-defined inverse function $\psi_{-1}(0,x)$.

\subsection{A parametrization of $\psi_0$ for general $a$}

In Section~\ref{section5} we will use a powerful technique which gives
a parametrization of both branches $\psi_0$ and $\psi_{-1}$ for
$L_a\le x<0$. However, here we only require an elementary technique
which allows us to develop a parametrization of $\psi_0(a,x)$ for
$x>0$. First we define
\[
u(x) = \tfrac{1}{2}x(x^{a}-x^{-a}) = \tfrac{1}{2}(x^{1+a}-x^{1-a}),
\quad x>0.
\]
Then $u\circ \exp(x) = \sinh(ax)e^x = f(a,x)$, and therefore
$\log\circ u^{-1}(x) = \psi(a,x)$. Hence, $e^{\psi(a,x)}=u^{-1}(x)$ so
that $e^{\psi(a,u(x))}=x$. Finally, we now require $x>0$, thus giving
the branch $\psi_0$, so that we can obtain
\[
\psi_0\!\left(a,\tfrac{1}{2}(x^{1+a}-x^{1-a})\right)=\log x,
\]
which gives the sought-after parametrization, defined for $\beta>1$:
\begin{align*}
  &x =\tfrac{1}{2}(\beta^{1+a}-\beta^{1-a}),\\
  &\psi_0(a,x)=\log\beta.
\end{align*}

\subsection{Special values of $f$ and $\psi_{-1}$ for general $a$}

The $\psi_{-1}$-branch is more difficult to parametrize for general
$a$.  However, it is easy to give an infinite sequence of points
$(x_n,y_n)$ satisfying $\psi_{-1}(a,x_n)=y_n$ and we will do so here.

Note that it is an easy exercise to show that the $n$th derivative of
$f$ vanishes at $n$th multiples of $M_a$, so that
\begin{equation}
  \frac{\partial^n}{\partial w^n}f(a,w) = 0\text{ for }w =
  n\,M_a,\quad n=1,2,\ldots
\end{equation}
If we define
\begin{equation}
  I_a(n) = \frac{1}{2}
  \left(\frac{1-a}{1+a}\right)^{\frac{n(1-a)}{2a}}\left(\left(\frac{1-a}{1+a}\right)^n-1\right)
\end{equation}
so that $I_a(1)=L_a$ and $I_a(2)=-2L_a^2/a$ is the inflection point of
$\psi_{-1}$ etc, then
\begin{equation}
  f(a,nM_a) = I_a(n).
\end{equation}
Of course, since $w=nM_a\le M_a$ for $n\ge 1$, this only gives the
solutions in branch $-1$ of the equation $f(a,w)=I_a(n)$, thus leading
to
\begin{equation}
  \psi_{-1}(a,I_a(n)) = n\,M_a,\quad n=1,2,\ldots
\end{equation}
Since there is also a corresponding solution for $\psi_0(a,I_a(n))$
but this value does not seem to have a simple expression. However,
for special values of $a$ we can find closed form
expressions for $\psi$ but we will return to this in Section~\ref{section5}.

\subsection{The special case $a=1/3$: explicit formulas of $\psi$ and $\omega$}

In general it is difficult to find explicit formulas for
$\psi$ and $\omega$. However, for some special cases of $a$ we will be
successful and the case $a=1/3$ turns out to be surprisingly easy to
handle. In Section~\ref{section5} we will give formulas for $\psi$ and
$\omega$ at other rational values of $a$ but they are considerably
more complicated since they are based on the roots of the quartic
polynomials. The formulas given here for $\psi$ and $\omega$ seem to
be the only simple exact solutions.

To compute $\psi(a,x)$ we need to solve for $w$ in the equation
$f(a,w)=x$.  When $a=1/3$ we then receive
\begin{equation}
  f(\tfrac{1}{3},w)
  = \frac{1}{2} \left(e^{\frac{4w}{3}} - e^{\frac{2w}{3}}\right)
  = \frac{1}{2} \left(Y^2 - Y\right) = x
\end{equation}
where
\[Y = e^{\frac{2w}{3}}\]
Solving the second degree polynomial gives the roots
\begin{equation}
  Y = \frac{1}{2} \pm \sqrt{2x+\tfrac{1}{4}}
\end{equation}
where the two solutions correspond to the two branches so that
\begin{align}
  \psi_0(\tfrac{1}{3},x) &= \frac{3}{2}\log\left(\frac{1}{2}
  + \sqrt{2x+\frac{1}{4}}\right), \quad -\frac{1}{8}\le x \\
  \psi_{-1}(\tfrac{1}{3},x) &= \frac{3}{2}\log\left(\frac{1}{2}
  - \sqrt{2x+\frac{1}{4}}\right), \quad -\frac{1}{8}\le x < 0.
\end{align}

We now combine this with the definition of $\omega$ in~\eqref{eqomega1}.  The case $w<M_{1/3}=-\tfrac{3}{2}\log 2$
corresponds to $0<Y<1/2$ and we receive
\begin{align*}
  \omega(\tfrac{1}{3},w) &= \psi_0(\tfrac{1}{3},f(\tfrac{1}{3},w))
  = \frac{3}{2}\log\left(\frac{1}{2} + \sqrt{Y^2-Y+\frac{1}{4}}\right) \\
  &= \frac{3}{2}\log\left(\frac{1}{2} + \left|Y-\frac{1}{2}\right|\right)
  = \frac{3}{2}\log(1-Y)
  = \frac{3}{2}\log\left(1 - e^{\frac{2w}{3}}\right).
\end{align*}
The other case $ -\tfrac{3}{2}\log 2 \le w < 0$ gives $1/2\le Y < 1$
so that
\begin{align*}
  \omega(\tfrac{1}{3},w) &= \psi_{-1}(\tfrac{1}{3},f(\tfrac{1}{3},w))
  = \frac{3}{2}\log\left(\frac{1}{2} - \sqrt{Y^2-Y+\frac{1}{4}}\right) \\
  &= \frac{3}{2}\log\left(\frac{1}{2} - \left|Y-\frac{1}{2}\right|\right)
  = \frac{3}{2}\log(1-Y)
  = \frac{3}{2}\log\left(1 - e^{\frac{2w}{3}}\right).
\end{align*}
We thus receive the same formula in both cases so we conclude that
\begin{equation}
  \omega(\tfrac{1}{3}, z) =
  \frac{3}{2}\log\left(1-e^{\frac{2z}{3}}\right), \quad z < 0.
\end{equation}

It is now easy to give, for example, the definite integral of $\omega$,
\begin{equation}
  \int_{-\infty}^0 \omega(\tfrac{1}{3},z)\dd z = -\frac{3\pi^2}{8},
\end{equation}
but in Section~\ref{section6} we will give the general formula for $a$.
now easily obtained in this case
\begin{align*}
  \psi_0(\tfrac{1}{3},x) &\sim \frac{3}{4}\log(2x) + \frac{3}{4\sqrt{2x}},
  \quad x\to\infty\\
  \psi_{-1}(\tfrac{1}{3},x) &\sim \frac{3}{2}\log(-2x) -3x + 9x^2,
  \quad x\to 0^-
\end{align*}
and we will state general asymptotes in Section~\ref{section9} and
\ref{section10}.

\subsection{$\psi(a,x)$ is transcendental number}
We shall prove that if $x\neq 0$ is algebraic number ($x\in \mathcal
A$), then $\psi(a,x)$ is transcendental number ($\psi(a,x)\in \mathcal
T$).  To prove it we need Lindemann-Weierstrass Theorem
(~\cite[Theorem 1.4]{Baker:1975}), i.e., if $a_1,\dots,a_n\in \mathcal
A$ and $\alpha_1,\dots,\alpha_n\in \mathcal A$ are distinct numbers,
then the equation
\[
a_1e^{\alpha_1}+\dots+a_ne^{\alpha_n}=0
\]
has only the trivial solution $a_1=\dots=a_n=0$.

\begin{proposition}
  If $a, x\in \mathcal A$, $x\neq 0$, then $\psi(a,x)\in \mathcal T$.
\end{proposition}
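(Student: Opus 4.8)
The plan is to recall the definition of $\psi$ as the inverse of $f(a,w)=\sinh(aw)e^w$, so that $w=\psi(a,x)$ satisfies the transcendental equation $\sinh(aw)e^w=x$. Writing this out explicitly using $\sinh(aw)=\tfrac{1}{2}(e^{aw}-e^{-aw})$, the defining relation becomes
\[
\tfrac{1}{2}\left(e^{w(1+a)}-e^{w(1-a)}\right)=x.
\]
First I would argue by contradiction: suppose $a,x\in\mathcal A$ with $x\neq 0$, but that $w=\psi(a,x)$ is nevertheless algebraic, i.e.\ $w\in\mathcal A$. The goal is then to rewrite the displayed equation as a vanishing $\mathcal A$-linear combination of exponentials of distinct algebraic numbers, so that Lindemann--Weierstrass forces all coefficients to vanish, yielding a contradiction.

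Next I would clear the denominator and move everything to one side, obtaining
\[
e^{w(1+a)}-e^{w(1-a)}-2x\,e^{0}=0.
\]
Here I read off the exponents $\alpha_1=w(1+a)$, $\alpha_2=w(1-a)$, $\alpha_3=0$ and the coefficients $a_1=1$, $a_2=-1$, $a_3=-2x$. Under the contradiction hypothesis each $\alpha_i$ is algebraic (a product/sum of the algebraic numbers $w$ and $a$) and each $a_i$ is algebraic (since $x\in\mathcal A$). The coefficients are manifestly nonzero: $a_1=1$, $a_2=-1$, and $a_3=-2x\neq 0$ because $x\neq0$. This directly contradicts the trivial-solution conclusion of Lindemann--Weierstrass, \emph{provided} the three exponents $\alpha_1,\alpha_2,\alpha_3$ are pairwise distinct.

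The main obstacle, and the only genuinely delicate point, is verifying this distinctness, which is where the case $x\neq0$ and the range $0<a<1$ matter. One must rule out coincidences among $\{w(1+a),\,w(1-a),\,0\}$. Since $0<a<1$ we have $1+a\neq 1-a$, so $\alpha_1=\alpha_2$ would force $w=0$; but $w=0$ gives $f(a,0)=\sinh(0)=0=x$, contradicting $x\neq0$. Similarly $\alpha_1=0$ or $\alpha_2=0$ each force $w=0$ (as $1\pm a\neq0$), again contradicting $x\neq 0$. Thus $w=\psi(a,x)\neq0$ under our hypotheses, and all three exponents are distinct. Hence Lindemann--Weierstrass applies and the contradiction is complete, so $\psi(a,x)\in\mathcal T$. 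I would note in passing that the argument is uniform across both branches $\psi_0$ and $\psi_{-1}$, since it only uses the defining equation and not the choice of branch.
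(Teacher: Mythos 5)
Your proof is correct and takes essentially the same route as the paper: assume $w=\psi(a,x)\in\mathcal A$, write the defining relation as $e^{(1+a)w}-e^{(1-a)w}-2x\,e^{0}=0$, and invoke Lindemann--Weierstrass. You are in fact slightly more careful than the paper, whose proof never verifies that the three exponents $(1+a)w$, $(1-a)w$, $0$ are pairwise distinct; your check that $0<a<1$ and $x\neq 0$ force $w\neq 0$ fills in exactly that hypothesis of the theorem.
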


\begin{proof}
  Assume that $a, x\in \mathcal A$, $x\neq 0$ and $\psi(a,x)\in \mathcal
  A$. Then $1+a, 1-a, 2x\in \mathcal A$ and therefore
  $(1+a)\psi(a,x),(1-a)\psi(a,x)\in \mathcal A$. By the definition of
  $\psi$ we have
  \[
  e^{(1+a)\psi(a,x)}-e^{(1-a)\psi(a,x)}-2xe^0=0,
  \]
  which is impossible by the Lindemann-Weierstrass Theorem.
\end{proof}

\section{Derivatives and a primitive function of $\psi$}\label{section4}
In Proposition~\ref{derivative}, we provide a formula for the $n$th
derivative of $\psi$ with respect to $x$. Here, a shorter notation
$\psi(x)=\psi(a,x)$, is used referring to both branches, and $a$ is
treated as a fixed parameter. A primitive function of $\psi$ is
obtained later in Proposition~\ref{ppsi}.

\begin{proposition}\label{derivative}
  For $n\in \mathbb N$
  \begin{equation}\label{ndpsi}
    \psi^{(n)}(x)=\frac{P_n(\cosh(a\psi(x)),\sinh(a\psi(x)))e^{-n\psi(x)}}
        {\left(a\cosh(a\psi(x))+\sinh(a\psi(x))\right)^{2n-1}},
  \end{equation}
  where the polynomials $P_n$ are given by the following recurrence formula:
  $P_1(x,y)=1$ and
  \begin{align*}
    &P_{n+1}(x,y)=P_n(x,y)((a-3na)x+(a^2-n-2na^2)y)\\
    &+\frac{\partial P_n}{\partial x}(x,y)(a^2xy+ay^2) +
    \frac{\partial P_n}{\partial y}(x,y)(axy+a^2x^2).
  \end{align*}
  
\end{proposition}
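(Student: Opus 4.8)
The plan is to argue by induction on $n$, exploiting that $\psi$ is by definition the inverse of $f(a,\cdot)$. Writing $\psi=\psi(x)$ for a fixed branch and fixed $a$, differentiation of $f(a,\psi(x))=x$ gives $\psi'(x)=1/f'(a,\psi(x))$. Since
\[
  f'(a,w) = \frac{\partial}{\partial w}\bigl(\sinh(aw)e^w\bigr) = e^w\bigl(a\cosh(aw)+\sinh(aw)\bigr),
\]
the first derivative is
\[
  \psi'(x) = \frac{e^{-\psi}}{a\cosh(a\psi)+\sinh(a\psi)},
\]
which is exactly \eqref{ndpsi} for $n=1$ with $P_1=1$. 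This settles the base case and, more importantly, isolates the quantity $D:=a\cosh(a\psi)+\sinh(a\psi)$ that will occur as the denominator throughout.

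For the inductive step I assume \eqref{ndpsi} for $n$ and differentiate once more in $x$. The only ingredients needed are the chain-rule derivatives
\[
  \frac{d}{dx}\cosh(a\psi) = a\sinh(a\psi)\,\psi', \qquad
  \frac{d}{dx}\sinh(a\psi) = a\cosh(a\psi)\,\psi',
\]
together with $\psi'=e^{-\psi}/D$ from the base case. Writing $c=\cosh(a\psi)$ and $s=\sinh(a\psi)$ and applying the quotient rule to the right-hand side of \eqref{ndpsi}, every differentiation brings down exactly one factor $\psi'=e^{-\psi}/D$; collecting the powers of $e^{-\psi}$ produces $e^{-(n+1)\psi}$ and, after clearing denominators over the common factor $D^{2n+1}=D^{2(n+1)-1}$, leaves
\[
  \psi^{(n+1)}(x) = \frac{e^{-(n+1)\psi}}{D^{2n+1}}\Bigl[(ac+s)\bigl(as\,\partial_x P_n + ac\,\partial_y P_n - nP_n\bigr) - (2n-1)P_n(a^2 s + ac)\Bigr],
\]
where $P_n$ and its partials are evaluated at $(c,s)$.

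It then remains to expand the bracket and recognise it as $P_{n+1}(c,s)$. Grouping the three types of contribution, the terms carrying $\partial_x P_n$ assemble into $\partial_x P_n\,(a^2 cs + a s^2)$, those carrying $\partial_y P_n$ into $\partial_y P_n\,(a^2 c^2 + a cs)$, and the remaining multiples of $P_n$ into $P_n\bigl((a-3na)c + (a^2-n-2na^2)s\bigr)$; under $x\mapsto c$, $y\mapsto s$ these are precisely the three summands of the stated recurrence for $P_{n+1}(x,y)$. Note that one only needs this identity evaluated at $(c,s)=(\cosh(a\psi),\sinh(a\psi))$, so it suffices to \emph{define} $P_{n+1}$ by the recurrence and read off the match; no polynomial identity on all of $\mathbb{R}^2$ is required.

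The step I expect to be the main obstacle is purely organisational: keeping exact track of the powers of $D$ produced by the quotient rule (the $(2n-1)$ from differentiating $D^{2n-1}$ must be reconciled with the extra factor $D$ coming from clearing denominators), and then sorting the expanded bracket into the three groups above without sign errors. There is no conceptual difficulty beyond this bookkeeping, since the hyperbolic relation $c^2-s^2=1$ is never invoked — the recurrence is manufactured so that the raw output of the differentiation matches it termwise.
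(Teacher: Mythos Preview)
Your proof is correct and follows essentially the same approach as the paper: induction on $n$, with the base case coming from $\psi'=1/f'(a,\psi)$ and the inductive step obtained by applying the quotient rule, substituting $\psi'=e^{-\psi}/D$, and then expanding the bracket to match the recurrence for $P_{n+1}$. The paper uses the notation $g,X,Y$ where you use $D,c,s$, but the computation is identical line by line; your explicit intermediate display and the remark that the identity $c^2-s^2=1$ is never needed are nice touches not spelled out in the paper.
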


\begin{proof}
  We proceed by induction. We shall use the shorter notation
  $\psi(x)=\psi(a,x)$ when referring to both branches. The case $n=1$,
  can be deduced by using the implicit function theorem:
  \begin{equation*}
    \psi'(x)
    = \frac{e^{-\psi(x)}}{\left(a\cosh(a\psi(x))+\sinh(a\psi(x))\right)}
    = \frac{1}{x\left(a\coth(a\psi(x))+1\right)}
  \end{equation*}
  with the special value
  \[
  \psi_0'(0) = \frac{1}{a}.
  \]
  Assume next that~\eqref{ndpsi} is valid for some natural number
  $n$. Let us define
  \begin{align*}
    g(x) &= a\cosh(a\psi(x))+\sinh(a\psi(x))\\
    X & =\cosh(a\psi(x))\\
    Y &= \sinh(a\psi(x))
   \end{align*}
  We then have
  \begin{align*}
    \psi^{(n+1)} &=\frac{e^{-n\psi}\psi'\left(-nP_n + \frac{\partial
        P_n}{\partial x}a\sinh(a\psi) + \frac{\partial P_n}{\partial
        y}a\cosh(a\psi)\right)g^{2n-1}}{g^{4n-2}}\\
    & -\frac{e^{-n\psi}\psi'P_n\left(a^2\sinh(a\psi) +
      a\cosh(a\psi)\right)(2n-1)g^{2n-2}}{g^{4n-2}}\\
    & = e^{-n\psi}\psi'g^{-2n}\left(\left(-nP_n + aY\frac{\partial
      P_n}{\partial x} + aX\frac{\partial P_n}{\partial
      y}\right)(aX+Y) - P_n(2n-1)(a^2Y+aX)\right)\\
    & = e^{-(n+1)\psi}g^{-2n-1}
    \left(P_n(-naX-nY-(2n-1)a^2Y-(2n-1)aX)\right)\\
    & + e^{-(n+1)\psi}g^{-2n-1}\left(\frac{\partial P_n}{\partial
      x}(a^2XY+aY^2) + \frac{\partial P_n}{\partial y}(a^2X^2 +
    aXY)\right)
  \end{align*}
  and we can conclude the desired result by the induction axiom.
\end{proof}

\begin{example}
  As an application of Proposition~\ref{derivative} we obtain the
  Taylor series formula of $\psi_0$ at $x=0$. We then get
  \[
  \psi_0^{(n)}(0)=\frac {P_n(1,0)}{a^{2n-1}}.
  \]
  In particular, the first few Taylor coefficients are
  \[
  \psi_0(0)=0,\quad \psi_0'(0)=\frac{1}{a},\quad \psi_0''(0)=-\frac{2}{a^2},
  \text{ and } \psi_0^{(3)}(0) = \frac {-a^4+10a^2}{a^5}.
  \]
  In Section~\ref{section7}, we shall give a more efficient method for
  computing these coefficients.
\end{example}

\begin{proposition}
  The function
  \begin{equation}\label{ppsi}
  \Psi(x) = x\psi(x)-\frac {x}{1-a^2}+\frac{ax}{1-a^2}\coth(a\psi(x))
  \end{equation}
  is a primitive function of $\psi(x)=\psi(a,x)$.
\end{proposition}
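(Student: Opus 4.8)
The plan is to verify that $\Psi(x)$ is a primitive of $\psi(x)$ by showing directly that $\Psi'(x)=\psi(x)$, using the first-order derivative formula for $\psi'$ established in Proposition~\ref{derivative}. Writing $\Psi(x)=x\psi(x)-\tfrac{x}{1-a^2}+\tfrac{a}{1-a^2}\,x\coth(a\psi(x))$, I would differentiate term by term. The first term contributes $\psi(x)+x\psi'(x)$ by the product rule, the second contributes the constant $-\tfrac{1}{1-a^2}$, and the third, again by the product rule, contributes $\tfrac{a}{1-a^2}\coth(a\psi(x))$ together with a term involving $\tfrac{d}{dx}\coth(a\psi(x))=-a\,\psi'(x)\,\operatorname{csch}^2(a\psi(x))$. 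Collecting the pieces, the target identity $\Psi'(x)=\psi(x)$ will reduce to showing that the remaining (non-$\psi(x)$) terms cancel.

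The key computational input is the clean form of $\psi'$ already recorded in the proof of Proposition~\ref{derivative}, namely
\begin{equation*}
  \psi'(x)=\frac{1}{x\left(a\coth(a\psi(x))+1\right)}.
\end{equation*}
I would substitute this into the two terms carrying a factor $\psi'(x)$ — the $x\psi'(x)$ from the first term and the $\operatorname{csch}^2$ term from the third. After substitution, the explicit $x$ in the denominator of $\psi'$ cancels against the leading factor $x$ of those terms, so both reduce to expressions in $\coth(a\psi(x))$ (equivalently in $X=\cosh(a\psi(x))$ and $Y=\sinh(a\psi(x))$) with no residual $x$ or $\psi(x)$. The claim then becomes a purely algebraic hyperbolic identity, which I would prove by clearing the common denominator $a\coth(a\psi)+1$ and using the Pythagorean relation $\cosh^2-\sinh^2=1$ to confirm that the constant $-\tfrac{1}{1-a^2}$, the $\tfrac{a}{1-a^2}\coth(a\psi)$ term, and the two $\psi'$-terms sum to zero.

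The main obstacle is bookkeeping rather than conceptual: one must correctly simplify the $\operatorname{csch}^2$ contribution against the $\coth$ term. Concretely, after using $\psi'$ the third-term derivative produces a piece proportional to $\dfrac{\operatorname{csch}^2(a\psi)}{a\coth(a\psi)+1}$, and I expect this to combine with the product-rule $\coth$ piece and the constant into a single fraction whose numerator vanishes identically by the $\tfrac{1}{1-a^2}$ normalization. An efficient way to organize this is to rewrite everything over the common denominator $Y^2\,(aX+Y)$ (so that $\coth=X/Y$ and $\operatorname{csch}^2=1/Y^2$ become polynomial), at which point the numerator is a low-degree polynomial in $X,Y$ that collapses to $0$ after applying $X^2-Y^2=1$. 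Since differentiation is valid on each branch wherever $\psi$ is differentiable, this computation establishes $\Psi'(x)=\psi(x)$ on the interior of each branch's domain, completing the proof.
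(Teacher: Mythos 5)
Your proposal is correct, and I checked the key algebraic step: with $c=\coth(a\psi)$ and $\operatorname{csch}^2(a\psi)=c^2-1$, the non-$\psi$ terms reduce, after clearing the denominator $(1-a^2)(ac+1)$, to
\begin{equation*}
(1-a^2)-(ac+1)+ac(ac+1)-a^2(c^2-1)=0,
\end{equation*}
which holds identically, so $\Psi'=\psi$ follows as you claim. However, your route is genuinely different from the paper's. You \emph{verify} the stated formula by differentiation, leaning on the first-order derivative formula $\psi'(x)=1/\bigl(x(a\coth(a\psi(x))+1)\bigr)$ from the proof of Proposition~\ref{derivative}; the paper instead \emph{derives} the primitive constructively, substituting $x=\sinh(at)e^t$, $\dd x=(a\cosh(at)+\sinh(at))e^t\dd t$, so that $\int\psi(x)\dd x$ becomes $\tfrac12\int t\bigl((1+a)e^{(1+a)t}-(1-a)e^{(1-a)t}\bigr)\dd t$, an elementary integral that is evaluated and then rewritten back in terms of $x$ and $\psi(x)$. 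Your approach is shorter and purely algebraic once $\psi'$ is available, and it nicely ties the two propositions of the section together; what it cannot do is discover the formula in the first place, and it does not yield the paper's subsequent remark that all integrals $\int x^n\psi^m(x)\dd x$ are elementary, which falls out of the substitution method but not of a verification argument. One minor point to make explicit: your form of $\psi'$ requires $x\neq 0$, so at the endpoint $x=0$ of the $\psi_0$ branch you should appeal to continuity of $\Psi$ (the paper computes $\Psi_0(0)=\lim_{x\to 0}\Psi_0(x)=a/(1-a^2)$ for exactly this reason); otherwise the argument is complete.
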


\begin{proof} Using the substitution $x=\sinh(at)e^t$,
  $\dd x=(a\cosh(at)+\sinh(at))e^t\dd t$, we get
  \begin{align*}
    \Psi(x) &= \int \psi(x)\dd x=\int t(a\cosh(at)+\sinh(at))e^t\dd
    t\\ & = \frac{1}{2}\int
    t\left((1+a)e^{(1+a)t}-(1-a)e^{(1-a)t}\right)\dd
    t\\ & = \frac{1}{2}\left(t-\frac{1}{1+a}\right)e^{t(1+a)} -
    \frac{1}{2}\left(t-\frac{1}{1-a}\right)e^{t(1-a)}\\ &=t\sinh
    (at)e^t+\frac{1}{1-a^2}(-e^t\sinh(at)+ae^t\cosh(at))\\ &=x\psi(x)
    -\frac{x}{1-a^2}+\frac{a}{1-a^2}e^{\psi(x)}\cosh(a\psi(x))\\ &=x\psi(x)
    -\frac{x}{1-a^2}+\frac{ax}{1-a^2}\coth(a\psi(x)).
  \end{align*}
\end{proof}

\begin{remark}
  In a similar manner, using the substitution $x=\sinh(at)e^t$, one
  can prove that integrals of the form
  \[
  \int x^n\psi^m(x)\dd x, \quad n,m\in \mathbb N,
  \]
  can be integrated in an elementary way.
\end{remark}

Exploiting the Taylor series $\psi_0(x) = \tfrac{x}{a} -
\tfrac{2x^2}{a^2}+\ldots$, and the value $\psi(L_a)=M_a$, we have
\begin{align*}
  \Psi_0(0) & = \lim_{x\to 0}\Psi_0(x) = \frac{a}{1-a^2},\\
  \Psi(L_a) & = L_a\left(M_a - \frac{2}{1-a^2}\right).
\end{align*}
Hence, we can now compute the following definite integral:
\begin{equation*}
  \int_{L_a}^0\psi_0(a,x)\dd x = \Psi_0(0) - \Psi(L_a) =
  \frac{a+2L_a}{1-a^2} - L_a M_a,
\end{equation*}
and, since $\psi_{-1}$ is the inverse of $f$,
\begin{align*}
  \int_{L_a}^0\psi_{-1}(a,x)\dd x
  & = \int_{-\infty}^{M_a}f(a,w)\dd w - L_aM_a \\
  & = \frac{2L_a}{1-a^2}-L_aM_a \\
  & = \frac{\left(\frac{1-a}{1+a}\right)^{\frac{1}{2a}}}{2\sqrt{1-a^2}}
  \left(\frac{-4a}{1-a^2} + \log\left(\frac{1-a}{1+a}\right)\right).
\end{align*}

\section{Explicit formulas for $\psi$}\label{section5}

In this section we shall, for certain rational values of $a$, find
explicit formulas for both branches of the $\psi$-function and the
transition function $\omega$.  We let $\mathrm{root}(p(x))$ denote the
real-valued positive solution(s) to the polynomial equation
$p(x)=0$. Recall that
\begin{equation}\label{def1}
  f(a,w) = e^w\sinh(aw)= \frac{1}{2}\left(e^{w(1+a)} - e^{w(1-a)}\right).
\end{equation}
For natural numbers $n$ and $m$ such that $n>m$ we set
\[
a = \frac{n-m}{n+m} \quad \text{and}\quad \alpha=\frac{2}{n+m}.
\]
so that
\[
1 + a = \frac{2n}{n+m}\text{ and } 1 - a = \frac{2m}{n+m}.
\]
Using the substitution $Y=e^{\alpha w}=e^{\frac{2w}{n+m}}$
we rewrite~\eqref{def1} as
\[f(a,w) = \frac{1}{2}\left( Y^n - Y^m\right) = x\]
and then
\begin{equation}\label{equ1}
  w = \frac{n+m}{2}\log\left(\text{root}(Y^n-Y^m-2x)\right).
\end{equation}
We are then able to give explicit solutions to this equation from
classical methods in the following cases:
\begin{equation*}
  w = \begin{cases}
    2\log (\text{root}(Y^3-Y-2x)) &\text{if } a = 1/2,\\
    \frac{5}{2}\log (\text{root}(Y^3-Y^2-2x))&\text{if } a = 1/5,\\
    \frac{5}{2}\log (\text{root}(Y^4-Y-2x)) &\text{if } a = 3/5,\\
    \frac{7}{2}\log (\text{root}(Y^4-Y^3-2x)) &\text{if } a = 1/7.
  \end{cases}
\end{equation*}
And the case $a=1/3$ has already been treated in
Section~\ref{section3}.

\medskip

\noindent\emph{Case a=1/2:} After some straightforward but rather
lengthy calculations using Cardano's method for solving cubic
equations, we arrive at the following:
\[
\psi_0(\tfrac{1}{2},x) =
\begin{cases}
  2\log \left(\sqrt[3]{x+\sqrt{x^2-\frac{1}{27}}} +
  \sqrt[3]{x-\sqrt{x^2-\frac{1}{27}}}\right), &
  \text{if } \frac{\sqrt{3}}{9}\le x\\
  2\log\left(\frac{2}{\sqrt{3}}
  \cos\left(\frac{\arccos(3\sqrt{3}x)}{3}\right) \right), &
  \text{if } -\frac{\sqrt{3}}{9} < x < \frac{\sqrt{3}}{9}
\end{cases}
\]
and
\[
\psi_{-1}(\tfrac{1}{2},x) = 2\log\left(\frac{2}{\sqrt{3}} \cos\left(
\frac{\arccos(3\sqrt{3}x)+4\pi}{3}\right) \right),
\quad\text{if }-\frac{\sqrt{3}}{9}<x< 0,
\]
which then gives the transition function
\[
\omega(\tfrac{1}{2},z) =
\begin{cases}
  2 \log \left(\frac{2}{\sqrt{3}} \cos \left(\frac{1}{3}
  \arccos\left(\frac{3\sqrt{3}}{2}
  \left(e^{3z/2} - e^{z/2}\right)\right)\right)\right), &
  \text{if } z < -\log 3\\
  2 \log \left(-\frac{2}{\sqrt{3}} \sin \left(\frac{\pi }{6}
  -\frac{1}{3}\arccos\left(\frac{3\sqrt{3}}{2}
  \left(e^{3x/2}-e^{x/2}\right)\right)\right)\right), &
  \text{if } -\log 3\le z < 0.
\end{cases}
\]

\medskip

\noindent\emph{Case a=1/5:} Again using Cardano's method for solving
cubic equations, we get in this case:
\[
\psi_0(x) =
\begin{cases}
  \frac{5}{2}\log\left(\sqrt[3]{x+\frac{1}{27} +
    \sqrt{x^2+\frac{2x}{27}}} + \sqrt[3]{x+\frac{1}{27} -
    \sqrt{x^2+\frac{2x}{27}}} + \frac{1}{3}\right), &
  \text{if } 0 \le x \\
  \frac{5}{2} \log\left(\frac{2}{3} \cos\left(
  \frac{\arccos(27x+1)}{3} \right) + \frac{1}{3} \right), &
  \text{if } -\frac{2}{27} < x < 0,
\end{cases}
\]
and
\[
\psi_{-1}(x) = \frac{5}{2} \log\left(\frac{2}{3} \cos\left(
\frac{\arccos(27x+1) + 4\pi}{3} \right) + \frac{1}{3} \right), \quad
\text{if } -\frac{2}{27} < x < 0.
\]
And now we obtain the transition function
\[
\omega(\tfrac{1}{5},z) =
\begin{cases}
  \frac{5}{2} \log \left(\frac{2}{3} \cos \left(\frac{1}{3} \arccos\left(\frac{27}{2}
  \left(e^{6 x/5}-e^{4 x/5}\right)+1\right)\right)+\frac{1}{3}\right), &
  \text{if }z<-\frac{5}{2}\log(3/2)\\
  \frac{5}{2} \log \left(\frac{1}{3}-\frac{2}{3} \sin \left(\frac{\pi}{6} - \frac{1}{3}
  \arccos\left(\frac{27}{2} \left(e^{6 x/5} - e^{4 x/5}\right)+1\right)\right)\right), &
  \text{if } -\frac{5}{2}\log(3/2)\le z < 0.
\end{cases}
\]

\medskip

In the following cases we will only state the $\psi$-functions.

\bigskip

\noindent\emph{Case a=3/5:} In this case we shall solve
\[
Y^4-Y-2x=0 \quad \text{for } x\geq -\frac{3}{8\sqrt[3]{4}}
\]
by using Ferrari's method. First we need the solution to the auxiliary
equation:
\[
v^3+2xv-\frac{1}{8} = 0.
\]
In our case, $x\geq -\frac{3}{8\sqrt[3]{4}}$, so that
\[
v = \frac{1}{\sqrt[3]{16}}\left(\sqrt[3]{1 - \sqrt{4\left(\frac{8x}{3}
    \right)^3 + 1}} + \sqrt[3]{1 + \sqrt{4\left(\frac{8x}{3} \right)^3
    + 1}} \right),
\]
and we then arrive at
\[
\psi_0(\tfrac{3}{5},x) = \frac{5}{2}\log\left( \frac{(2v)^{3/4} +
  \sqrt{2-(2v)^{3/2}}}{2\sqrt[4]{2v}} \right),
\quad\text{if } -\frac{3}{8\sqrt[3]{4}}\le x
\]
Furthermore,
\[
\psi_{-1}(\tfrac{3}{5},x) = \frac{5}{2} \log\left(\frac{(2v)^{3/4} -
  \sqrt{2-(2v)^{3/2}}}{2\sqrt[4]{2v}} \right),
\quad \text{if }-\frac{3}{8\sqrt[3]{4}} < x < 0.
\]

\medskip

\noindent\emph{Case a=1/7:} In this case we shall solve
\[
Y^4-Y^3-2x=0 \quad \text{for } x\geq -\left(\frac{3}{8}\right)^3
\]
by using Ferrari's method. After some lengthy but straightforward
calculations we get:
\[
\psi_0(\tfrac{1}{7},x) = \frac{7}{2}\log \left(\frac{\sqrt{2v+\frac{1}{4}} +
  \sqrt{-2v+\frac12 + \frac{1}{2\sqrt{8v+1}}}}{2} +\frac 14\right),
\quad\text{if } -\left(\frac{3}{8}\right)^3\le x
\]
and
\[
\psi_{-1}(\tfrac{1}{7},x) = \frac{7}{2} \log\left(\frac{\sqrt{2v+\frac{1}{4}} -
  \sqrt{-2v+\frac{1}{2} + \frac{1}{2\sqrt{8v+1}}}}{2} +\frac 14\right),
\quad\text{if } -\left(\frac{3}{8}\right)^3\le x < 0,
\]
where
\[
v = \sqrt[3]{-\frac{x}{8} + \sqrt{\left(\frac{2x}{3}\right)^3 +
    \left(\frac{x}{8}\right)^2 }} + \sqrt[3]{-\frac{x}{8} -
  \sqrt{\left(\frac{2x}{3}\right)^3 + \left(\frac{x}{8}\right)^2 }}.
\]

\section{A parametrization and a definite integral}\label{section6}

The main aim of this section is to prove that
\[
\int_{-\infty}^{0}\omega(a,z)\dd z = \frac{\pi^2}{3(a^2-1)}.
\]
We rely on an ingenious idea to simultaneously parameterize both
branches of $\psi$ for $x<0$. This idea originates
from~\cite{Corless:1996}, and is explained in~\cite[Theorem
  1.3.1]{Kalugin:2011}. Our parametrization is presented in the
following theorem:

\begin{theorem}\label{sec6_thm1}
  Let
  \[
  x =\frac{1}{2}\left(\frac{\alpha^{1+a} - \alpha^{2a}}{\alpha^{1+a}
    - 1}\right)^{\frac{1-a}{2a}} \left( \frac{1 -
    \alpha^{2a}}{\alpha^{1 + a} - 1} \right)
  \]
  where $1<\alpha$. Then
  \begin{align*}
    \psi_0(a,x)& =
    \log(\alpha) + \psi_{-1}(a,x) \\
    \psi_{-1}(a,x) & =
    \frac{1}{2a}\log\left(\frac{\alpha^{1-a} - 1}{\alpha^{1+a} -
      1}\right).
  \end{align*}
\end{theorem}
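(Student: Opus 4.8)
The plan is to exploit the defining relation $f(a,w)=x$, where $f(a,w)=\tfrac12\bigl(e^{w(1+a)}-e^{w(1-a)}\bigr)$, together with the observation that \emph{both} values $\psi_0(a,x)$ and $\psi_{-1}(a,x)$ are roots of this single equation. Writing $y_0=\psi_0(a,x)$ and $y_{-1}=\psi_{-1}(a,x)$, the key device borrowed from the Lambert-$W$ treatment of Corless and Kalugin is to parametrize the pair by the gap between the two branches. Since for $L_a\le x<0$ the two solutions straddle the separator $M_a$ with $y_0\ge M_a\ge y_{-1}$, I would set
\[
\alpha = e^{\,y_0-y_{-1}}\ge 1,
\]
so that $y_0=y_{-1}+\log\alpha$; the branch point $x=L_a$ corresponds to $\alpha=1$ and $x\to 0^-$ to $\alpha\to\infty$, which is why the statement ranges over $1<\alpha$.

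First I would substitute $y_0=y_{-1}+\log\alpha$ into the equality $f(a,y_0)=f(a,y_{-1})$. Setting $u=e^{y_{-1}}$ and using $e^{y_0(1\pm a)}=\alpha^{1\pm a}u^{1\pm a}$, this equality becomes
\[
\alpha^{1+a}u^{1+a}-\alpha^{1-a}u^{1-a}=u^{1+a}-u^{1-a},
\]
which factors as $u^{1+a}(\alpha^{1+a}-1)=u^{1-a}(\alpha^{1-a}-1)$. Dividing and solving for the power $u^{2a}$ yields
\[
u^{2a}=\frac{\alpha^{1-a}-1}{\alpha^{1+a}-1},
\]
so that $y_{-1}=\log u=\tfrac{1}{2a}\log\!\bigl(\tfrac{\alpha^{1-a}-1}{\alpha^{1+a}-1}\bigr)$, which is exactly the claimed formula for $\psi_{-1}$; the relation $\psi_0=\log\alpha+\psi_{-1}$ is then immediate from the definition of $\alpha$. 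Along the way one checks that $\alpha>1$ forces the right-hand side to be positive and $u^{2a}<1$ (because $1-a<1+a$), consistent with $y_{-1}<0$.

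It remains to recover the parametrization of $x$ itself by evaluating $x=f(a,y_{-1})=\tfrac12 u^{1-a}(u^{2a}-1)$. Here $u^{1-a}=(u^{2a})^{(1-a)/2a}$ is read off from the displayed value of $u^{2a}$, while
\[
u^{2a}-1=\frac{\alpha^{1-a}-\alpha^{1+a}}{\alpha^{1+a}-1}=\frac{\alpha^{1-a}(1-\alpha^{2a})}{\alpha^{1+a}-1}.
\]
The main obstacle is a careful reconciliation of two differently written forms: the expression I obtain carries the base $\tfrac{\alpha^{1-a}-1}{\alpha^{1+a}-1}$ inside the power, whereas the theorem displays $\tfrac{\alpha^{1+a}-\alpha^{2a}}{\alpha^{1+a}-1}$. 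The reconciling identity is $\alpha^{1+a}-\alpha^{2a}=\alpha^{2a}(\alpha^{1-a}-1)$, so pulling the factor $\alpha^{2a}$ through the exponent $(1-a)/2a$ produces precisely the extra $\alpha^{1-a}$ needed to convert my $\tfrac12 u^{1-a}\cdot\tfrac{\alpha^{1-a}(1-\alpha^{2a})}{\alpha^{1+a}-1}$ into the stated form. Tracking these fractional-power manipulations, and confirming all bases remain positive for $\alpha>1$, is the only genuinely delicate part; everything else is forced by the single equation $f(a,y_0)=f(a,y_{-1})$.
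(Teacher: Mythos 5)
Your proposal is correct and follows essentially the same route as the paper: both parametrize the two branches by their gap, $\psi_0-\psi_{-1}=\log\alpha$ (the paper writes $\psi_0=\psi_{-1}+p/a$ and later sets $p=a\log\alpha$), and then solve the single equation $f(a,\psi_0)=f(a,\psi_{-1})$ for $\psi_{-1}$, recovering $x$ at the end. The only difference is in execution: the paper solves via the $\sinh$ addition formula and $\operatorname{arcoth}$, whereas you substitute $u=e^{\psi_{-1}}$ and factor the exponentials directly, then reconcile the two equivalent forms of $x$ with the identity $\alpha^{1+a}-\alpha^{2a}=\alpha^{2a}(\alpha^{1-a}-1)$; the formulas agree.
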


\begin{proof}
  We are looking for $\alpha$ such that
  \[
  x(\alpha) = e^{\psi_0(x(\alpha))}\sinh(a\psi_0(x(\alpha))) =
  e^{\psi_1(x(\alpha))}\sinh(a\psi_1(x(\alpha)))\, .
  \]
  Let $\psi_0= \psi_{-1} + p/a$, where $p>0$ is a parameter. Then,
  \[
  e^{\psi_0}\sinh(a\psi_0) = e^{\left(\psi_{-1} + \frac{p}{a}\right)}
  \sinh(a(\psi_{-1}+p/a)) = e^{\psi_{-1}} \sinh(a\psi_{-1})\, ,
  \]
  which implies
  \[
  e^{-\frac{p}{a}} = \frac{\sinh(a\psi_{-1}+p)}{\sinh(a\psi_{-1})} =
  \cosh(p) + \coth(a\psi_{-1})\sinh(p)\, .
  \]
  Hence,
  \[
  \psi_{-1} = \frac{1}{a}\operatorname{arcoth}\left(\frac{e^{-\frac{p}{a}}
    - \cosh(p)}{\sinh(p)} \right) = \frac{1}{2a}\log\left(\frac{e^{-p} -
    e^{-\frac{p}{a}}}{e^{p}-e^{-\frac{p}{a}}}\right)\, .
  \]
  Continuing in a similar manner, we set $\psi_{-1}=\psi_{0}-p/a$, where
  $p>0$ is a parameter, which implies
  \[
  e^{\psi_0-p/a}\sinh(a\psi_0-p)) = e^{\psi_0}\sinh(a\psi_0)\, .
  \]
  Hence,
  \[
  e^{\frac{p}{a}} = \frac{\sinh(a\psi_{0}-p)}{\sinh(a\psi_{0})} =
  \cosh(p)-\coth(a\psi_{0})\sinh(p)\, ,
  \]
  which yields
  \[
  \psi_{0} = \frac{1}{a}\operatorname{arcoth}\left(\frac{-e^{\frac{p}{a}}
    + \cosh(p)}{\sinh(p)}\right) = \frac{1}{2a}\log\left(\frac{-e^{p} +
    e^{\frac{p}{a}}}{-e^{-p} + e^{\frac{p}{a}}}\right)\, .
  \]
  Thus, for $p\in(0,\infty)$ it holds
  \[
  x = e^{\psi_0}\sinh(a\psi_0) = \frac{1}{2}\left(\frac{-e^{p} +
    e^{\frac{p}{a}}}{-e^{-p} + e^{\frac{p}{a}}} \right)^{\frac{1-a}{2a}}
  \left( \frac{e^{-p} - e^p}{e^{\frac{p}{a}} - e^{-p}}\right)\, .
  \]
  Set  $p=a\log(\alpha)$ where $\alpha\in (1,\infty)$. Then
  \begin{align*}
    x & =\frac{1}{2}\left(\frac{\alpha^{1+a} - \alpha^{2a}}{\alpha^{1+a}
      - 1}\right)^{\frac{1-a}{2a}} \left( \frac{1 -
      \alpha^{2a}}{\alpha^{1 + a} - 1} \right)\\ \psi_0(x)& =
    \log(\alpha) + \psi_{-1}(x) \\ \psi_{-1}(x) & =
    \frac{1}{2a}\log\left(\frac{\alpha^{1-a} - 1}{\alpha^{1+a} -
      1}\right)\, .
  \end{align*}
\end{proof}



Thanks to the parametrization in Theorem~\ref{sec6_thm1} we get an
alternative way to find explicit formulas for $\psi_{0}$, and
$\psi_{-1}$. The below example is for the case $a=1/3$.

\begin{example}
  We return to the special case of $a=1/3$. For $-1/8\le x<0$ and
  $\alpha\geq 1$ we let
  \begin{equation}\label{eqAlpha}
    x = \frac{-\alpha^{2/3}}{2(\alpha^{2/3}+1)^2}\, ,
  \end{equation}
  and
  \[
  \psi_{0} = -\frac{3}{2}\log\left(1 + \alpha^{-2/3}\right) \quad
  \text{and } \psi_{-1} = -\frac{3}{2}\log\left(1+\alpha^{2/3}\right)\,.
  \]
  Thus, by solving~(\ref{eqAlpha}) in $\alpha^{2/3}$, and using
  Theorem~\ref{sec6_thm1} we get
  \[
  \psi_{0} = \frac{3}{2}\log\left(\frac{1}{2} + \sqrt{2x+\frac14}\right)
  \quad \text{and } \quad \psi_{-1} = \frac{3}{2}\log \left(\frac{1}{2}
  - \sqrt{2x+\frac14}\right)\,.
  \]
  \hfill{$\Box$}
\end{example}

Employing Theorem~\ref{sec6_thm1}, we can now state the following theorem:

\begin{theorem}
  \[
  \int_{-\infty}^{0}\omega(a,z)\dd z = \frac{\pi^2}{3(a^2-1)}.
  \]
\end{theorem}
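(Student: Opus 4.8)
The plan is to convert the one-dimensional integral over $z$ into a single integral over the parameter $\alpha$ of Theorem~\ref{sec6_thm1}, which encodes both branches of $\psi$ simultaneously, and then to reduce everything to the elementary integral $\int_0^\infty\log(1-e^{-cs})\dd s$. First I would split $\int_{-\infty}^0\omega(a,z)\dd z$ at the fixed point $z=M_a$. On $(-\infty,M_a)$ the argument sits on the lower branch, so $z=\psi_{-1}(a,x)$ and $\omega(a,z)=\psi_0(a,x)$; on $(M_a,0)$ we have instead $z=\psi_0(a,x)$ and $\omega(a,z)=\psi_{-1}(a,x)$. Writing $s=\log\alpha\in(0,\infty)$, so that $\psi_0=\psi_{-1}+s$ by Theorem~\ref{sec6_thm1}, and substituting $\dd z=\tfrac{\dd z}{\dd s}\dd s$ in each piece (with the orientation $s\colon\infty\to 0$ on the first piece and $s\colon 0\to\infty$ on the second) the two integrals merge into the single ``Wronskian-type'' integral
\[
\int_{-\infty}^0\omega(a,z)\dd z
= \int_0^\infty\left(\psi_{-1}\,\frac{\dd\psi_0}{\dd s}-\psi_0\,\frac{\dd\psi_{-1}}{\dd s}\right)\dd s .
\]

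Next, I would use $\psi_0=\psi_{-1}+s$ to collapse the integrand: the $\psi_{-1}\dfrac{\dd\psi_{-1}}{\dd s}$ terms cancel and one is left with $\psi_{-1}-s\,\dfrac{\dd\psi_{-1}}{\dd s}$. The decisive simplification is to peel off the divergent linear part. From the parametrization one has $\psi_{-1}(s)=\tfrac{1}{2a}\log\frac{e^{(1-a)s}-1}{e^{(1+a)s}-1}$, and factoring $e^{(1\pm a)s}$ out of numerator and denominator gives $\psi_{-1}(s)=-s+h(s)$ with
\[
h(s)=\frac{1}{2a}\Big(\log\big(1-e^{-(1-a)s}\big)-\log\big(1-e^{-(1+a)s}\big)\Big).
\]
Then $\psi_{-1}-s\,\tfrac{\dd\psi_{-1}}{\dd s}=h-s\,h'$, and an integration by parts, whose boundary terms vanish because $h(0)=M_a$ is finite while $h(s)$ decays exponentially as $s\to\infty$, yields $\int_0^\infty(h-s\,h')\dd s=2\int_0^\infty h(s)\dd s$.

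Finally I would evaluate $\int_0^\infty\log(1-e^{-cs})\dd s=-\tfrac{\pi^2}{6c}$ for $c>0$, obtained by expanding $\log(1-e^{-cs})=-\sum_{k\ge1}\tfrac{1}{k}e^{-kcs}$, integrating termwise, and summing $\sum_{k\ge1}k^{-2}=\pi^2/6$. Applying this with $c=1-a$ and $c=1+a$ gives
\[
2\int_0^\infty h(s)\dd s
=\frac{1}{a}\left(-\frac{\pi^2}{6(1-a)}+\frac{\pi^2}{6(1+a)}\right)
=\frac{\pi^2}{6a}\cdot\frac{-2a}{1-a^2}
=\frac{\pi^2}{3(a^2-1)},
\]
which is the claimed value; as a sanity check, $a=1/3$ returns $-3\pi^2/8$, matching the special case computed in Section~\ref{section3}.

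The main obstacle is the bookkeeping in the change of variables rather than any hard analysis: because $\psi_{-1}(s)\sim -s$ diverges, the naive attempt to split $\int(\psi_{-1}-s\,\psi_{-1}')$ into $\int\psi_{-1}$ and $\int s\,\psi_{-1}'$ produces two individually divergent integrals, so one must isolate the linear term to expose the convergent remainder $h$. Keeping the orientation of the two branch pieces and the vanishing of all boundary terms straight is where the care is needed; once the reduction to $2\int_0^\infty h\dd s$ is in place the evaluation is routine.
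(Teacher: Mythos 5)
Your proof is correct and follows essentially the same route as the paper: both reduce $\int_{-\infty}^{0}\omega(a,z)\dd z$ to the Wronskian-type integral over the two branches, invoke the parametrization of Theorem~\ref{sec6_thm1}, collapse the integrand using $\psi_0=\psi_{-1}+s$, and finish by expanding logarithms in series to produce the two $\pi^2/6$ contributions. The only difference is organizational: you peel off the divergent linear part explicitly and integrate by parts to reach $2\int_0^\infty h(s)\dd s$, whereas the paper accomplishes the same cancellation implicitly via the substitution $\alpha=1/\beta$ followed by term-by-term integration.
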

\begin{proof}
  The definition of the transition function states:
  \[
  \omega(a,z)=
  \begin{cases}
    \psi_{0}(a,f(a,z)) & \text{if } z <M_a \\
    \psi_{-1}(a,f(a,z)) & \text{if } M_a\leq z < 0\, .
  \end{cases}
  \]
  For $y\geq L_a$ in the relation $y=f(a,z)$ we note that one inverse
  branch is $z=\psi_0(a,y)\ge M_a$, and the other branch is
  $z=\psi_{-1}(a,y)\le M_a$.  Hence,
  \[
  \int_{-\infty}^{0}\omega(a,z)\dd z = \int_{0}^{L_a} \left( \psi_0(y)
  \psi'_{-1}(y) - \psi_{-1}(y) \psi'_0(y) \right) \dd y.
  \]
  From Theorem~\ref{sec6_thm1} it follows
  \[
  \psi_0(y(\alpha)) = \log(\alpha) + \psi_{-1}(y(\alpha)) \quad
  \text{and} \quad \psi_{-1}(y(\alpha)) =
  \frac{1}{2a}\log\left(\frac{\alpha^{1-a} - 1}{\alpha^{1+a} - 1}\right),
  \]
  and therefore
  \[
  \int_{-\infty}^{0}\omega(a,z)\dd z = \int_{1}^{\infty}
  \left(\frac{1}{\alpha}\Psi(\alpha) -
  \log(\alpha)\Psi'(\alpha)\right)\, \dd \alpha ,
  \]
  where
  \[
  \Psi(\alpha)=\frac{1}{2a}\log\left(\frac{\alpha^{1-a} -
    1}{\alpha^{1+a} - 1}\right)\, .
  \]
  Finally, from the substitution
  $\alpha=\frac{1}{\beta}$ it follows
  \begin{multline*}
    \int_{-\infty}^{0}\omega(a,z)\dd z =
    \int_{0}^{1}\Bigg(\frac{1}{2a\beta}\sum_{n=1}^{\infty}
    \frac{-\left((\beta^{1-a})^n - (\beta^{1+a})^n\right)}{n} \\ +
    \frac{\log(\beta)}{2a\beta}\left(
    (1-a)\sum_{n=1}^{\infty}(\beta^{1-a})^n -
    (1+a)\sum_{n=1}^{\infty}(\beta^{1+a})^n\right)\Bigg)\, \dd \beta
    \\ = \frac{1}{1-a^2}\left( - \frac{\pi^2}{6} -
    \frac{\pi^2}{6}\right) = \frac{\pi^2}{3(a^2-1)}\, .
  \end{multline*}
\end{proof}

\section{Taylor series of $\psi_0$ at zero}\label{section7}

In this section and the next, we will focus on series expansions.
Here we use Lagrange's inversion theorem to determine the Taylor
series for $\psi_0$ about $x=0$ (see~(\ref{Sec7_TaylorPsio})). First
let us recall Lagrange's inversion theorem~\cite{AbramowitzStegun:1972,Charalambides:2002}:

\begin{theorem}[Lagrange's inversion theorem]\label{thm_LIT}
  If $x=f(w)$, $x_0=f(w_0)$, for some real-analytic function in a
  neighbourhood of $w_0$ with $f^{\prime}(w_0)\neq 0$, then
  \[
  w=g(x)=w_0 +
  \sum_{k=1}^{\infty}\frac{x-x_0}{k!}\left. \left(\frac{d^{k-1}}{dw^{k-1}}
  \left(\frac{w-w_0}{f(w)-x_0} \right)^k \right) \right|_{w=w_0},
  \]
  where the convergence radius is strictly positive. Furthermore, if
  \[
  f(w)=\sum_{k=0}^{\infty} f_k\frac{w^k}{k!} \quad \text{ and } \quad
  g(x)=\sum_{k=0}^{\infty} g_k\frac{x^k}{k!}\, ,
  \]
  then it holds
  \begin{align*}
    &g_n = \frac{1}{f_1^n}\sum_{k=1}^{n-1}(-1)^kn^{\bar k} B_{n-1,k}
    \left(\frac{f_{2}}{2f_1}, \frac{f_3}{3f_1},\dots,
    \frac{f_{n-k+1}}{(n-k+1)f_1}\right)\\ & = \frac{1}{a^n}
    \sum_{k=1}^{n-1}(-1)^ka^kn^{\bar k} B_{n-1,k}\left( \frac{f_{2}}{2},
    \frac{f_3}{3},\dots, \frac{f_{n-k+1}}{(n-k+1)}\right), \,n\geq 2,
  \end{align*}
  where $g_1=\frac {1}{f_1}=\frac 1a$, $n^{\bar k}=n(n+1)\cdots(n+k-1)$ is
  the rising factorial and $B_{n,k}$ is the partial exponential Bell polynomial.
\end{theorem}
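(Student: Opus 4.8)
The plan is to prove the two assertions separately: first the classical Lagrange inversion formula (the displayed series for the local inverse $g=f^{-1}$), and then its repackaging in terms of partial Bell polynomials. Since $f$ is real-analytic near $w_0$ with $f'(w_0)\neq 0$, the holomorphic inverse function theorem shows that $f$ is a biholomorphism of a small complex disc about $w_0$ onto a neighbourhood of $x_0$, so the local inverse $g$ is analytic there and has a Taylor series of strictly positive radius; this already secures the convergence claim. After translating we may assume $w_0=x_0=0$. I would then compute the Taylor coefficients of $g$ by Cauchy's formula, $[x^k]g=\frac{1}{2\pi i}\oint \frac{g(x)}{x^{k+1}}\dd x$, and substitute $x=f(w)$, $\dd x=f'(w)\dd w$, $g(f(w))=w$, to obtain $[x^k]g=\frac{1}{2\pi i}\oint \frac{w\,f'(w)}{f(w)^{k+1}}\dd w$ over a small loop around $0$.

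Writing $\frac{f'(w)}{f(w)^{k+1}}=\frac{d}{dw}\bigl(-\tfrac{1}{k}f(w)^{-k}\bigr)$ and integrating by parts around the closed contour kills the boundary term and yields $[x^k]g=\frac{1}{k}\cdot\frac{1}{2\pi i}\oint f(w)^{-k}\dd w=\frac{1}{k}\,[w^{k-1}]\bigl(\tfrac{w}{f(w)}\bigr)^k$. Converting this coefficient extraction back into a derivative through $\frac{1}{k}[w^{k-1}](\cdots)=\frac{1}{k!}\frac{d^{k-1}}{dw^{k-1}}(\cdots)\big|_{0}$, and reinstating $w_0,x_0$, gives exactly the first displayed formula.

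For the second part I again set $w_0=x_0=0$ and read off $g_n=n!\,[x^n]g=(n-1)!\,[w^{n-1}]\bigl(\tfrac{w}{f(w)}\bigr)^n$. Factoring $f(w)/w=f_1\bigl(1+\psi(w)\bigr)$ with $\psi(w)=\sum_{j\ge1}\frac{f_{j+1}}{(j+1)!\,f_1}w^j$, I expand by the generalized binomial series $\bigl(1+\psi\bigr)^{-n}=\sum_{k\ge0}\binom{-n}{k}\psi^k=\sum_{k\ge0}(-1)^k\frac{n^{\bar k}}{k!}\psi^k$, which is precisely where the sign $(-1)^k$ and the rising factorial $n^{\bar k}$ enter. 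Extracting $[w^{n-1}]\psi^k$ through the exponential-generating-function definition of the partial Bell polynomial identifies the arguments as $\frac{f_{j+1}}{(j+1)f_1}$ and produces a factor $\frac{k!}{(n-1)!}$, which cancels against the $(n-1)!/k!$ already present and collapses everything to the stated sum over $1\le k\le n-1$ (the range being forced since $\psi^k$ starts at order $w^k$). The second equality then follows by pulling the common factor $f_1=a$ out of every argument via the homogeneity $B_{n-1,k}(\lambda x_1,\lambda x_2,\dots)=\lambda^k B_{n-1,k}(x_1,x_2,\dots)$.

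The analytic half is routine; the delicate point is the last step, namely matching the computation to the precise normalisation of $B_{n,k}$ so that the factorials, the sign $(-1)^k$, and the rising factorial $n^{\bar k}$ emerge in exactly the claimed positions. Keeping the exponential-generating-function convention for the Bell polynomials aligned with the $f_k\,w^k/k!$ normalisation of $f$ is what makes this bookkeeping close cleanly, and it can be cross-checked against the low-order values $g_1=1/f_1$, $g_2=-f_2/f_1^3$ obtained directly by differentiating $g'(x)=1/f'(g(x))$.
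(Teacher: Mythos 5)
Your argument is sound, but there is an important structural point: the paper never proves this theorem. It is quoted as a known classical result, cited to \cite{AbramowitzStegun:1972,Charalambides:2002}, and then used as a black box in Sections~\ref{section7} and~\ref{section9}. So there is no paper proof to compare against; what you have supplied is the standard complex-analytic derivation. Extending the real-analytic $f$ holomorphically and invoking the holomorphic inverse function theorem does secure the positive radius of convergence; the Cauchy coefficient formula plus the substitution $x=f(w)$ and integration by parts correctly yields $[x^k]g=\tfrac{1}{k}[w^{k-1}]\left(w/f(w)\right)^k$ (the boundary term dies because $w\,f(w)^{-k}$ is single-valued on the loop); and the Bell-polynomial bookkeeping closes as you say, since $[w^{n-1}]\psi^k=\tfrac{k!}{(n-1)!}B_{n-1,k}\bigl(\tfrac{f_2}{2f_1},\dots,\tfrac{f_{n-k+1}}{(n-k+1)f_1}\bigr)$ and the terms $k=0$ and $k\ge n$ vanish for $n\ge 2$.

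Two points deserve explicit mention, because your derivation silently corrects the statement rather than reproduces it. First, the theorem's first display has $\frac{x-x_0}{k!}$ where it must be $\frac{(x-x_0)^k}{k!}$; your Cauchy-formula computation produces the correct power, and you should say so rather than claim it "gives exactly the first displayed formula." Second, and more substantively, the printed second expression for $g_n$ carries the factor $a^k$, but the homogeneity $B_{n-1,k}(\lambda x_1,\lambda x_2,\dots)=\lambda^k B_{n-1,k}(x_1,x_2,\dots)$ applied with $\lambda=1/f_1=1/a$ --- which is exactly the step you invoke --- produces $a^{-k}$, not $a^k$. A sanity check at $n=2$ confirms this: the correct value $g_2=-f_2/a^3$ is obtained from $\frac{1}{a^2}(-1)\,2^{\bar 1}\,a^{-1}\,B_{1,1}\bigl(\tfrac{f_2}{2}\bigr)$, whereas the printed $a^{+k}$ would give $-f_2/a$. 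So when you write that "the second equality then follows," what actually follows is the corrected identity; the equality as printed in the paper is a typo, and a careful write-up should flag it, since the paper's later computations in Section~\ref{section9} use the (correct) first form.
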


We apply this theorem to the function
\begin{align*}
  f(w) &= f(a,w) = \frac{1}{2} (e^{(1+a)w} - e^{(1-a)w})\\
  & = \sum_{n=1}^{\infty}\frac{(1+a)^n-(1-a)^n}{2n!} w^n \\
  & = \sum_{n=1}^{\infty}\frac{f_n}{n!}w^n,
\end{align*}
and then we arrive at
\[
\psi_0(a,x)=\sum_{n=1}^{\infty}\frac{g_n}{n!}x^n,
\]
where
\[
g_n = \frac{1}{f_1^n}\sum_{k=1}^{n-1}(-1)^kn^{\bar k} B_{n-1,k}
\left(\frac{f_{2}}{2f_1}, \frac{f_3}{3f_1},\dots,
\frac{f_{n-k+1}}{(n-k+1)f_1}\right)\, .
\]

Thus, the first few terms of the Taylor series of $\psi_0$ around $x=0$ are given by:
\begin{multline}\label{Sec7_TaylorPsio}
  \psi_0(a, x) = \frac{x}{a} -\frac{1}{a^2}x^2 -\frac{(a^2-9)}{6a^3}x^3
  +\frac{2(a^2-4)}{3a^4}x^4 \\
  +\frac{(9a^4-250a^2+625)}{120a^5}x^5
  -\frac{2(4a^4-45a^2+81)}{15a^6}x^6\\
  -\frac{(225a^6-12691a^4+84035a^2-117649)}{5040a^7}x^7
  +\frac{16(9a^6-196a^4+896a^2-1024)}{315a^8}x^8 \\
  +\frac{(1225a^8-116244a^6+1439046a^4-4960116a^2+4782969)}{40320a^9}x^9 \\
  -\frac{2(576a^8-20500a^6+170625a^4-468750a^2+390625)}{2835a^{10}}x^{10}
  +\mathcal{O}(x^{11})
\end{multline}
Furthermore, the radius of converges of the Taylor series can not exceed $|L_a|$.

\section{Series expansions of $\psi$ at $x=L_a$ and of $\omega$ at $z=M_a$}\label{section8}

This section aims to determine the series expansions of $\psi$ at
$x=L_a$ and of $\omega$ at $z=M_a$ with the help of
Proposition~\ref{invthm} as a tool. The series are stated
in~(\ref{seriespsi0}),~(\ref{seriespsi1}), and~(\ref{seriesomega}),
respectively.

\begin{proposition}\label{invthm}
  Let $f$ be a smooth function in the neighborhood of $0$ such that
  $f(0)=f'(0)=0$ and $f''(0)>0$. Then $f$ has two smooth inverse
  functions (in some neighborhood of $0$), a right inverse $h(x)>0$
  for $x>0$, and a left inverse $g(x)<0$ for $x>0$, which can be
  expressed as
  \[
  \begin{aligned}
    &h(x)=h_1\sqrt{x}+h_2x+h_3x^{3/2}+\ldots, \\
    &g(x)=g_1\sqrt{x}+g_2x+g_3x^{3/2}+\ldots,
  \end{aligned}
  \]
  where $g_{2k}=h_{2k}$ and $g_{2k-1}=-h_{2k-1}$ for $k=1,2,\ldots$.
\end{proposition}

\begin{proof}
  Without loss of generality, we may assume that $f$ is convex in a
  possible smaller neighborhood of zero.  The Taylor expansion at zero
  can be presented as
  \[
  f(x)=x^2(f_2+f_3x+\cdots), \quad \text{with } f_2>0.
  \]
  Therefore, $f(x)=x^2\tilde f(x)$ for some smooth
  function $\tilde f$ with $\tilde f(x)>0$. Thus, $\sqrt {\tilde
    f(x)}$ is smooth in the neighborhood of $0$. Set
  \[
  F(x)=\begin{cases}
  \sqrt{x^2\tilde f(x)}, &\text{if } x\geq 0 \\
  -\sqrt{x^2\tilde f(x)}, &\text{if } x\leq 0 .
  \end{cases}
  \]
  The function $F$ is invertible and smooth in a possible punctured
  neighborhood of $0$.  Furthermore, for all $n\in \mathbb N$ the
  following limits exist:
  \[
  \lim_{x\to 0^+}f^{(n)}(x) \quad \text{ and } \quad \lim_{x\to 0^-} f^{(n)}(x)
  \]
  and $F^2=f$. From this it follows that $f$ has two smooth inverse
  functions, $f^{-1}=F^{-1}\circ (x^2)^{-1}$, which can be expressed
  as $F^{-1}(\sqrt{y})$ and $F^{-1}(-\sqrt{y})$, $y\geq 0$. Finally,
  by applying the Taylor series expansion to $F^{-1}$ we get the
  desired conclusion.
\end{proof}

Let $K_a = L_a(a^2-1)$ and recall that $f(a,M_a) = L_a$. Then we
receive the following series expansion around $y=0$
\begin{multline*}
  \frac{1}{K_a}\left(f(a,y+M_a)-L_a\right) = \frac{1}{2}y^2 +
  \frac{1}{3}y^3 + \frac{a^2+3}{24}y^4 + \frac{a^2+1}{30}y^5 \\ +
  \frac{a^4+10a^2+5}{720}y^6 + \frac{3a^4+10a^2+3}{2520}y^7 +
  \frac{a^6 + 21a^4 +35a^2+7}{40320}y^8 + \mathcal{O}(y^9)
\end{multline*}
The inverse series expansion gives us $\psi_0$ can be found by using
the series ansatz of $f$ and $h$ in Proposition~\ref{invthm} and solve
$f(h(x))=x$ term by term.
\begin{multline}\label{seriespsi0}
  \psi_0(a, x\,K_a + L_a) - M_a = \sqrt{2}\sqrt{x} - \frac{2}{3}x +
  \frac{11-3a^2}{18\sqrt{2}}x^{3/2} - \frac{43-27a^2}{135}x^2 \\ +
  \frac{81a^4 - 786a^2 + 769}{2160\sqrt{2}} x^{5/2} -
  \frac{8(81a^4-318a^2+221)}{8505} x^3 \\ +
  \frac{680863-1273509a^2+551853a^4-30375a^6}{2721600\sqrt{2}}x^{7/2}
  + \mathcal{O}(x^4)
\end{multline}
which converges for $0\le x< -L_a/K_a=1/(1-a^2)$.  Note that the
coefficients of $x^3$ and $x^{7/2}$ can change sign for $0<a<1$.

Thanks to Proposition~\ref{invthm} we can deduce the other branch
\begin{multline}\label{seriespsi1}
  \psi_{-1}(a, x\,K_a + L_a) - M_a = -\sqrt{2}\sqrt{x} - \frac{2}{3}x
  - \frac{11-3a^2}{18\sqrt{2}}x^{3/2} - \frac{43-27a^2}{135}x^2 \\ -
  \frac{81a^4 - 786a^2 + 769}{2160\sqrt{2}}x^{5/2} -
  \frac{8(81a^4-318a^2+221)}{8505} x^3 \\ -
  \frac{680863-1273509a^2+551853a^4-30375a^6}{2721600\sqrt{2}}x^{7/2}
  + \mathcal{O}(x^4)
\end{multline}
and again this series converges for $0\le x< 1/(1-a^2)$. Note that the
second argument in both series is $x\,K_a + L_a\in [L_a,0)$.

Now we find the composition of the series of $\psi_{-1}(a, x\,K_a +
L_a) - M_a$ with that of $(f(a,x+M_a)-L_a)/K_a$, which gives us the
series of $\omega(a,x+M_a)-M_a$, and note that the $K_a$-factor
cancels out,

\begin{multline}\label{seriesomega}
  \omega(a,x+M_a)-M_a = -x - \frac{2}{3}x^2 - \frac{4}{9}x^3
  +\frac{2(3a^2-22)}{135}x^4 +\frac{4(9a^2-26)}{405}x^5 \\ -
  \frac{4(3a^4-88a^2+150)}{2835}x^6 -
  \frac{8(81a^4-808a^2+956)}{42525}x^7\\ +
  \frac{2(27a^6-2106a^4+11124a^2-9968)}{127575}x^8 +
  \frac{4(135a^6-3258a^4+11064a^2-7928)}{229635}x^9\\ -
  \frac{4(2025a^8-341604
    a^6+4049838a^4-9850752a^2+5857336)}{189448875}x^{10}
  +\mathcal{O}(x^{11}).
\end{multline}

Note that the coefficient of $x^4$ is the first which depends on $a$.
Note also that the coefficient of $x^9$ is the first which can become
zero for some $0<a<1$. This happens at $a=0.998689$. The coefficient
for $x^{10}$ becomes zero at $a=0.952489$.\\

\begin{example}
  In Fig.~\ref{fig5} we show $\psi_0$, $\psi_{-1}$, $\omega$ for
  $a=1/2$ and their respective series expansion as just stated. In
  this case the first few terms of the respective expansions are
  \begin{align*}
    \psi_0\left(\tfrac{1}{2},\tfrac{x}{4\sqrt{3}}-\tfrac{1}{3\sqrt{3}}\right)
    + \log(3) &= \sqrt{2}\sqrt{x} -
    \frac{2}{3}x+\frac{41}{72\sqrt{2}}x^{3/2} - \frac{29}{108}x^2 +
    \frac{9241}{34560\sqrt{2}}x^{5/2}
    +\ldots\\ \psi_{-1}\left(\tfrac{1}{2},\tfrac{x}{4\sqrt{3}}-\tfrac{1}{3\sqrt{3}}\right)
    + \log(3) &= -\sqrt{2}\sqrt{x} -
    \frac{2}{3}x-\frac{41}{72\sqrt{2}}x^{3/2} - \frac{29}{108}x^2 -
    \frac{9241}{34560\sqrt{2}}x^{5/2}
    +\ldots\\ \omega\left(\tfrac{1}{2},x-\log(3)\right)+\log(3) &= -x
    -\frac{2}{3}x^2 - \frac{4}{9}x^3 -\frac{17}{54}x^4
    -\frac{19}{81}x^5+\ldots
  \end{align*}\hfill{$\Box$}
\end{example}

\begin{figure}
  \begin{minipage}{0.33\textwidth}
    \begin{center}
      \includegraphics[width=\textwidth]{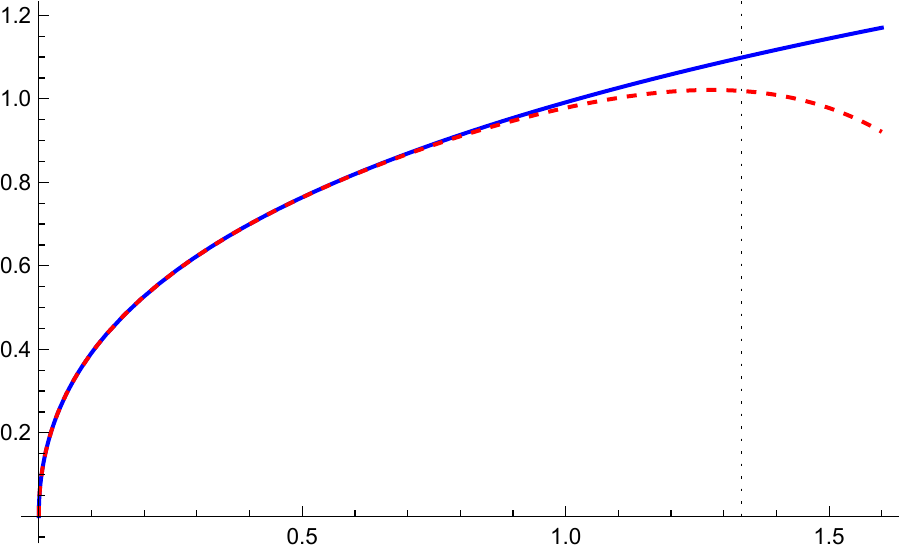}
    \end{center}
  \end{minipage}%
  \begin{minipage}{0.33\textwidth}
    \begin{center}
      \includegraphics[width=\textwidth]{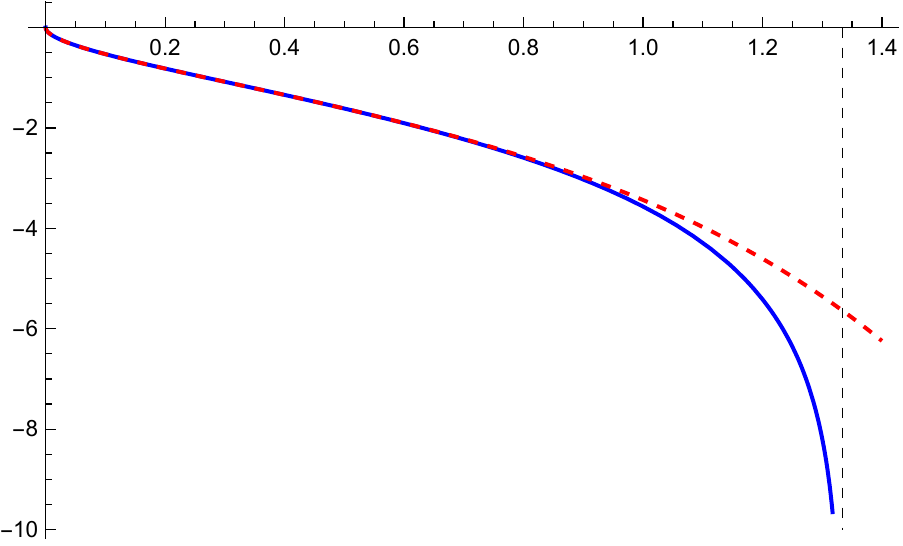}
    \end{center}
  \end{minipage}%
  \begin{minipage}{0.33\textwidth}
    \begin{center}
      \includegraphics[width=\textwidth]{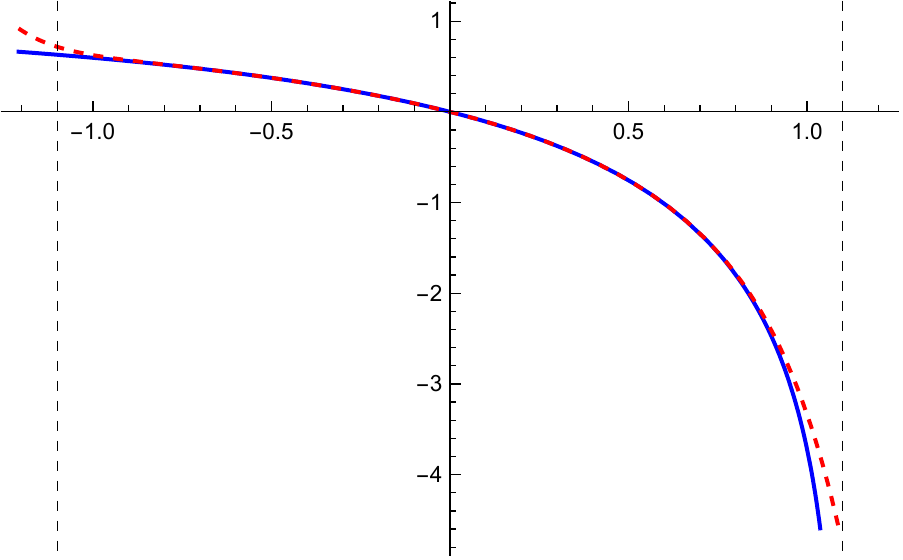}
    \end{center}
  \end{minipage}
  \caption{\label{fig5} Left: Plots of $\psi_0(a,x/K_a+L_a)+M_a$
    (solid blue curve) and the corresponding 6th order series
    expansion (dashed red curve).  Middle: Plots of
    $\psi_{-1}(a,x/K_a+L_a)+M_a$ (solid blue curve) and the
    corresponding 6th order series expansion (dashed red curve).
    Right: Plots of $\omega(a,x-M_a)+M_a$ (solid blue curve) and the
    corresponding 11th order series expansion. Plots show $a=1/2$. The
    dotted vertical lines in the plots indicate the end-points of the
    convergence interval. These are at $x=-L_a/K_a$ for the
    left\&middle plots and at $x=\pm M_a$ in the right plot.}
\end{figure}

\section{Asymptotics of the branch $\psi_0$ as $x\to\infty$}\label{section9}

Here we determine the asymptotic behaviour of $\psi_0(a,x)$ as
$x\to\infty$. An important tool in doing so is the following theorem.

\begin{theorem}\label{thmpsi0}
  Let
  \[x\ge \frac{1}{2}\left(1 +
  \frac{1}{e^{|\frac{1}{3}-a|}-1}\right)^{\frac{1+a}{2a}}.
  \]
  Then the following holds
  \begin{enumerate}
  \item if $0< a < \frac{1}{3}$, then
    \[
    \frac {1}{1+a}(2x)^{-\frac{2a}{1+a}} + \frac{\log(2x)}{1+a} \leq
    \psi_0(a,x)\leq \frac{\log(2x)}{1+a} +
    \frac{2}{1+a}(2x)^{-\frac{2a}{1+a}}.
    \]
     \item if $\frac 13<a<1$, then
    \[
    \frac{1}{3(1+a)}(2x)^{-\frac{2a}{1+a}} + \frac{\log(2x)}{1+a} \leq
    \psi_0(a,x) \leq \frac{\log(2x)}{1+a} +
    \frac{1}{1+a}(2x)^{-\frac{2a}{1+a}};
    \]
  \end{enumerate}
\end{theorem}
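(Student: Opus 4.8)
The plan is to linearise the defining relation and reduce everything to a single scalar attached to $x$. Writing $y=\psi_0(a,x)$, the equation $f(a,y)=x$ reads
\[
2x=e^{(1+a)y}-e^{(1-a)y}=e^{(1+a)y}\bigl(1-e^{-2ay}\bigr).
\]
First I would set $u=e^{-2ay}\in(0,1)$ and $g=-\log(1-u)>0$, so that taking logarithms gives the \emph{exact} identity $\psi_0(a,x)=\frac{\log(2x)}{1+a}+\frac{g}{1+a}$. Since $e^{(1+a)y}=u^{-1/c}$ with $c=\frac{2a}{1+a}\in(0,1)$, the relation also yields $2x=u^{-1/c}(1-u)$, hence the companion identity
\[
s:=(2x)^{-\frac{2a}{1+a}}=\frac{u}{(1-u)^{c}}=\bigl(1-e^{-g}\bigr)e^{cg}.
\]
Thus the whole theorem becomes a statement about the number $g$ determined by $x$.

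Second, I would turn the four inequalities into bounds on $g$. Because $\psi_0=\frac{\log(2x)+g}{1+a}$, the claimed two-sided estimates are \emph{equivalent} to $s\le g\le 2s$ (case $a<\tfrac13$) and $\tfrac{s}{3}\le g\le s$ (case $a>\tfrac13$); using the identity for $s$ these reduce to comparing $g\,e^{-cg}$ with $1-e^{-g}$, i.e. to two-sided control of $R(g)=\dfrac{g\,e^{-cg}}{1-e^{-g}}$. One checks $R(0^{+})=1$ while the sign of $R'(0^{+})$ equals that of $\tfrac12-c$, which flips exactly at $a=\tfrac13$: this is the origin of the case split. The hypothesis on $x$ is precisely what confines $g$: the crude bound $\psi_0\ge\frac{\log(2x)}{1+a}$ forces $u\le s$, and $x\ge x_0$ is equivalent to $s\le 1-e^{-\lambda}$ with $\lambda=\left|\tfrac13-a\right|$, so altogether $g\le\lambda$. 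Everything therefore reduces to verifying the bounds on $R$ over $(0,\lambda]$.

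Third, I would dispatch the easy bounds and set up the hard ones by monotonicity. For $a<\tfrac13$ the upper bound $R\le 2$ is immediate from $R(g)\le\frac{g}{1-e^{-g}}$, which is increasing and stays well below $2$ on $(0,\tfrac13]\supseteq(0,\lambda]$. The lower bound $R\ge1$ I would prove by showing $D(g)=g\,e^{-cg}-(1-e^{-g})$ has $D(0)=0$ and $D'\ge0$ on $(0,\lambda]$; after simplification $D'\ge0$ amounts to $(1-cg)e^{(1-c)g}\ge1$, which holds because this function increases up to $g_{\max}=\frac{1-2c}{c(1-c)}$ and the inequality $\lambda\le g_{\max}$ collapses to the unconditional polynomial bound $2c^{2}-5c+6>0$. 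For $a>\tfrac13$ the upper bound $R\le1$ is even cleaner: when $c>\tfrac12$ the map $(1-cg)e^{(1-c)g}$ is decreasing for every $g>0$, so the same $D$ is decreasing from $D(0)=0$ and $D\le0$ throughout, with no constraint on $g$ needed.

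Finally, the genuine obstacle is the case $a>\tfrac13$ lower bound $R\ge\tfrac13$. Writing $\tilde D(g)=3g\,e^{-cg}-(1-e^{-g})$ with $\tilde D(0)=0$, I would again seek $\tilde D'\ge0$ on $(0,\lambda]$, which by the monotonicity of $(1-cg)e^{(1-c)g}$ reduces to the single endpoint inequality
\[
3\bigl(1-c\lambda\bigr)e^{(1-c)\lambda}\ge1,\qquad \lambda=\frac{2(2c-1)}{3(2-c)},\ c\in(\tfrac12,1).
\]
This mixed polynomial--exponential inequality is sharp, with equality as $c\to1$ (i.e. $a\to1$), so the constant $\tfrac13$ and the threshold leave no slack and a soft estimate will not suffice. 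I expect the real work to lie here, presumably proved by taking logarithms and analysing the resulting function of $c$, or by a refined convexity lower bound on $e^{(1-c)\lambda}$ strong enough to beat $\frac{1}{3(1-c\lambda)}$; every other step is routine one-variable calculus. Feeding the verified bounds on $R$ back through $\psi_0=\frac{\log(2x)+g}{1+a}$ then returns the two displayed chains of inequalities.
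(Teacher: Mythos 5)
Your reductions are all correct, and up to the rescaling $g=(1+a)\varphi$, where $\varphi=\psi_0(a,x)-\frac{\log(2x)}{1+a}$, your framework is the same as the paper's: the paper likewise derives $e^{2a\varphi}-e^{(a-1)\varphi}=(2x)^{-\frac{2a}{1+a}}$ (your identity $s=(1-e^{-g})e^{cg}$), uses the hypothesis on $x$ to confine the unknown to the interval $(0,|\tfrac13-a|]$, and then compares the exponential expression against linear bounds, with the case split at $a=\tfrac13$ arising exactly as you describe. Where you differ, you are in fact cleaner: the paper's comparison inequalities are collected in Lemma~\ref{thmpsi0_lem}, which is asserted as ``a straight-forward exercise'' and never proved, whereas you establish three of the four bounds by explicit monotonicity arguments; moreover, for the upper bound when $a<\tfrac13$ the paper goes through a quadratic inequality in $\varphi$ and a discriminant estimate, while your observation $R(g)\le g/(1-e^{-g})\le\tfrac{1/3}{1-e^{-1/3}}<1.18$ on $(0,\lambda]$ avoids that entirely and even improves the constant $2$.

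The only genuine gap is the step you flag yourself: the lower bound $R\ge\tfrac13$ for $a>\tfrac13$, which you correctly reduce to the endpoint inequality $3(1-c\lambda)e^{(1-c)\lambda}\ge1$ with $\lambda=\frac{2(2c-1)}{3(2-c)}$, $c\in(\tfrac12,1)$, but then leave unproved on the expectation that ``the real work'' lies there. It does not: both factors are individually at least $1$. Since $c<1$ and $\lambda>0$ we have $e^{(1-c)\lambda}\ge 1$, and
\[
3(1-c\lambda)-1=\frac{3(2-c)-2c(2c-1)}{2-c}-1=\frac{4(1-c^{2})}{2-c}\ge 0,
\]
so $3(1-c\lambda)\ge 1$ as well. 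The sharpness you observed as $c\to 1$ is simply both factors tending to $1$ from above, and it costs nothing; no logarithmic analysis or refined convexity is needed. Inserting these two lines closes your argument, and with that your proof is complete and, in the places noted above, sharper and more self-contained than the published one.
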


\begin{remark}
  Let $x_0$ be the point such that the bounds hold when $x\ge x_0$.
  Numerical experimentation suggests that for the lower bounds we have
  \begin{itemize}\itemsep2mm
  \item[$(i)$] If $0<a<1/3$
    then \[x_0=\frac{e^{-29/4}}{(\tfrac{1}{3}-a)^2}\] This seems to be
    the exact optimal point when $a\to(1/3)^-$. It is not optimal when
    $a\to 0^+$ but it still works then.
  \item[$(ii)$] If $1/3<a<1$ then $x_0\approx 0.1$ works. The difference
    between $\psi_0$ and the lower bound is decreasing when $x\ge 1/4$
    (this is the limit when $a\to 1^-$).
  \end{itemize}

  For the upper bounds we get
  \begin{itemize}\itemsep2mm
  \item[$(i)$] If $0<a\lesssim 0.102$ then $x_0\sim
    \exp(-3.8+\tfrac{0.117}{a})$.  The difference between the upper
    bound and $\psi_0$ is decreasing for
    $x>\exp(-3.1+\tfrac{0.35}{a})$. Both of these formulas are rough
    estimates but they are much smaller than the $x$-bound in the
    theorem. In any case, the upper bounds require very large $x$ if
    they are to hold for small $a$.
  \item[$(ii)$] When $0.102\lesssim a < 1$ the upper bound holds for all $x>0$.
  \item[$(iii)$] When $0.1255\lesssim a < 1$ the difference between the upper
    bound and $\psi_0$ is decreasing for all $x>0$.
  \end{itemize}
\end{remark}

\medskip

Let us start with observing that if $y=\psi_0(a,x)$, then
\[
x = \sinh(ay)e^y = \frac{e^{(1+a)y}-e^{(1-a)y}}{2} < \frac {e^{(1+a)y}}{2},
\]
and therefore we get
\begin{equation*}
  \psi_0(a,x) = y > \frac{1}{1+a} \log(2x).
\end{equation*}

We shall next continue with proving two auxiliary lemmas.
\begin{lemma}\label{4.1}
  For $0<a<1$ it holds
  \[
  0\leq \psi_0(a,x)-\frac{\log (2x)}{1+a}\leq
  \log\left(\frac{(2x)^{\frac{a}{1+a}}}{(2x)^{\frac{a}{1+a}} -
    (2x)^{-\frac{a}{1+a}}}\right),
  \]
  for all $x>0$. In particular,
  \begin{equation*}
    \lim_{x\to \infty}\left(\psi_0(a,x) - \frac{1}{1+a}\log(2x)\right) = 0.
  \end{equation*}
\end{lemma}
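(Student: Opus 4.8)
The plan is to reduce the entire statement to one exact identity for the quantity $\psi_0(a,x)-\frac{\log(2x)}{1+a}$ and then to compare the exponential that appears in that identity against a purely $x$-dependent quantity. Throughout I write $y=\psi_0(a,x)$; note that $y>0$ whenever $x>0$, since $f(a,w)=\sinh(aw)e^w>0$ exactly when $w>0$, and we have $x=\tfrac12\bigl(e^{(1+a)y}-e^{(1-a)y}\bigr)$.

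For the lower bound I would simply reuse the estimate established immediately before the lemma: dropping the positive term $e^{(1-a)y}$ gives $2x<e^{(1+a)y}$, hence $y>\frac{\log(2x)}{1+a}$, which is precisely $\psi_0(a,x)-\frac{\log(2x)}{1+a}>0$. This is the easy half and needs nothing new.

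The heart of the matter is the upper bound, and the key step is to factor $2x=e^{(1+a)y}\bigl(1-e^{-2ay}\bigr)$ and take logarithms, producing the exact identity
\[
\psi_0(a,x)-\frac{\log(2x)}{1+a}=-\frac{1}{1+a}\log\bigl(1-e^{-2ay}\bigr),
\]
valid for all $x>0$ (and re-proving positivity, since $0<e^{-2ay}<1$). To turn this into the claimed bound I would feed the lower bound back in: from $y>\frac{\log(2x)}{1+a}$ I get $2ay>\frac{2a}{1+a}\log(2x)$, hence $e^{-2ay}<(2x)^{-\frac{2a}{1+a}}$. Since $t\mapsto-\log(1-t)$ is increasing on $(0,1)$ and $\frac{1}{1+a}<1$, the identity then yields
\[
\psi_0(a,x)-\frac{\log(2x)}{1+a}<\frac{1}{1+a}\Bigl(-\log\bigl(1-(2x)^{-\frac{2a}{1+a}}\bigr)\Bigr)\le-\log\bigl(1-(2x)^{-\frac{2a}{1+a}}\bigr),
\]
and multiplying numerator and denominator inside the logarithm by $(2x)^{\frac{a}{1+a}}$ rewrites the right-hand side as $\log\frac{(2x)^{a/(1+a)}}{(2x)^{a/(1+a)}-(2x)^{-a/(1+a)}}$, which is exactly the asserted bound.

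The one point I would be careful about is the range: the right-hand side is real and positive only once $(2x)^{-\frac{2a}{1+a}}<1$, i.e. $2x>1$, which is the only regime in which the upper estimate has content; for $0<2x\le1$ the lower bound holds trivially because $\log(2x)\le0<\psi_0(a,x)$. This is the main (minor) obstacle, rather than any real difficulty, since the comparison itself is elementary once the identity is in hand. Finally the limit follows by squeezing: as $x\to\infty$ we have $(2x)^{-\frac{2a}{1+a}}\to0$, so the upper bound tends to $-\log(1)=0$ while the lower bound is identically $0$, forcing $\psi_0(a,x)-\frac{1}{1+a}\log(2x)\to0$.
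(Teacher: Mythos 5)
Your proof is correct, and at heart it uses the same bootstrap as the paper: first establish $\psi_0(a,x) > \frac{\log(2x)}{1+a}$ by dropping the term $e^{(1-a)y}$, then feed that crude bound back into the defining equation to produce the upper bound. The execution differs, though. The paper writes $e^{\psi_0(a,x)-\frac{\log(2x)}{1+a}} = \frac{x\,(2x)^{-\frac{1}{1+a}}}{\sinh(a\psi_0(a,x))}$ and replaces $\sinh(a\psi_0(a,x))$ by the smaller value $\sinh\bigl(a\tfrac{\log(2x)}{1+a}\bigr)$, obtaining the stated bound in one stroke; you instead extract the exact identity $\psi_0(a,x)-\frac{\log(2x)}{1+a} = -\frac{1}{1+a}\log\bigl(1-e^{-2a\psi_0(a,x)}\bigr)$ and bound $e^{-2a\psi_0(a,x)}$ by $(2x)^{-\frac{2a}{1+a}}$, which yields the stated bound multiplied by the extra factor $\frac{1}{1+a}$ --- a strictly sharper estimate --- that you then discard. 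The genuine added value of your write-up is the range caveat: the right-hand side of the lemma is real and positive only when $(2x)^{\frac{a}{1+a}} > (2x)^{-\frac{a}{1+a}}$, i.e.\ $x>\tfrac12$, so the lemma's ``for all $x>0$'' is too generous; the paper's own proof has the same hidden restriction (dividing by $\sinh\bigl(a\tfrac{\log(2x)}{1+a}\bigr)$ and keeping the inequality's direction silently requires that quantity to be positive), but never mentions it. Since the lemma is invoked only for the limit $x\to\infty$ and inside Theorem~\ref{thmpsi0} under a largeness hypothesis that forces $x>\tfrac12$, nothing downstream is affected, but your version records the restriction honestly.
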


\begin{proof}
  Using the definition of $\psi_0$ and the fact that function $\sinh$
  is increasing we have
  \begin{align*}\label{psi0_3}
    &1\leq e^{\psi_0(a,x)-\frac{\log (2x)}{1+a}} =
    e^{\psi_0(a,x)}(2x)^{-\frac{1}{1+a}} =
    \frac{x(2x)^{-\frac{1}{1+a}}}{\sinh(a\psi_0(a,x))}\\ &\leq
    \frac{x(2x)^{-\frac{1}{1+a}}}{\sinh\left(a\frac{\log(2x)}{1+a}\right)}
    = \frac{(2x)^{\frac{a}{1+a}}}{(2x)^{\frac{a}{1+a}} -
      (2x)^{-\frac{a}{1+a}}}.
  \end{align*}
  The last term above tends to $1$ as $x\to \infty$, which proves the
  final claim.
\end{proof}

The following lemma is a straight-forward exercise obtained from
taking the Taylor expansion of $e^y$ about $0$.

\begin{lemma}\label{thmpsi0_lem}
  For any $0<a < \frac{1}{3}$ there exists a point $b(a)>0$ such that
  \begin{enumerate}
    \itemsep 2mm
  \item[$(i)$] $e^{2ay}-e^{(a-1)y}-(a+1)y\leq 0$, and
  \item[$(ii)$] $e^{2ay}-e^{(a-1)y}-(a+1)y+\frac{1}{2}(-3a^2-2a+1)y^2 \geq 0$,
  \end{enumerate}
  for all $0\leq y\leq b(a)$. Furthermore, for $a>\frac{1}{3}$ there
  exists a point $b(a)>0$ such that
  \[
  (1+a)y\leq e^{2ay} - e^{(a-1)y}\leq 3(1+a)y
  \]
  for all $0\leq y\leq b(a)$. In fact in both cases one can take $b(a)=|a-\frac 13|$.
\end{lemma}

We are now ready to present the proof of the main result of this section.
\begin{proof}[Proof of Theorem~\ref{thmpsi0}.]
  From Lemma~\ref{4.1} it follows that $\varphi(x) = \psi_0(a,x) -
  \frac{\log(2x)}{1+a}$ satisfies: $\varphi > 0$, $\varphi(x) \to 0$,
  as $x \to \infty$ and $\varphi(x) \leq h(x)$, where
  \[
  h(x)=\log\left(\frac{(2x)^{\frac{a}{1+a}}}{(2x)^{\frac{a}{1+a}} -
    (2x)^{-\frac{a}{1+a}}}\right)\, .
  \]
  Next, substitute $x=\varphi(x)$ in the implicit definition of $\psi_0$, and we arrive at
  \[
  e^{(1+a)\left(\varphi(x) + \frac{\log(2x)}{1+a}\right)} -
  e^{(1-a)\left(\varphi(x) + \frac{\log(2x)}{1+a}\right)} = 2x,
  \]
  and after a rearrangement we obtain
  \begin{equation}\label{thmpsi0_2}
    e^{2a\varphi(x)} - e^{(a-1)\varphi(x)} = (2x)^{-\frac{2a}{1+a}}.
  \end{equation}

  We have that $0<\varphi(x) \leq h(x)$. Furthermore, by inspection,
  for any $\epsilon>0$ we have that $h(x)\leq \epsilon$ if, and only
  if,
  \begin{equation}\label{estim}
  x\geq \frac 12\left(1+\frac{1}{e^{\epsilon}-1}\right)^{\frac {1+a}{2a}}.
  \end{equation}
  Choose $\epsilon=b(a)=|\frac{1}{3} - a|$, where $b(a)$ is from
  Lemma~\ref{thmpsi0_lem}. To conclude this proof we apply
  Lemma~\ref{thmpsi0_lem} with $y=\varphi(x)$ to~(\ref{thmpsi0_2}). To
  summarize and write out certain details:

  \begin{enumerate}
  \item  By Lemma~\ref{thmpsi0_lem}
    \begin{align*}
      &\varphi(x)\geq \frac{1}{1+a}(2x)^{-\frac{2a}{1+a}}, \text{ for
      } a < \frac{1}{3};\\ & \frac{1}{3(1+a)}(2x)^{-\frac{2a}{1+a}}
      \leq \varphi(x)\leq \frac{1}{1+a}(2x)^{-\frac{2a}{1+a}}, \text{
        for } a \geq \frac{1}{3}.
    \end{align*}
  \item For $a < \frac{1}{3}$, again by using Lemma~\ref{thmpsi0_lem}
    \[
    \beta_1(a)\varphi(x) + \frac{1}{2}\beta_2(a)\varphi^2(x) \leq
    (2x)^{-\frac{2a}{1+a}},
    \]
    where $\beta_1(a)=1+a$ and $\beta_2(a)=3a^2+2a-1$.
    Note that by (\ref{estim})
    \[
    \Delta = \beta_1(a)^2+2\beta_2(a)(2x)^{-\frac{2a}{1+a}} =
    (1+a)^2\left(1 + \frac{6a-2}{1+a}(2x)^{-\frac{2a}{1+a}}\right)
    \geq 0\, ,
    \]
    since
    \[
     1+\frac{1}{e^{|a-\frac 13|}-1} \geq \left|\frac{1+a}{2-6a}\right|.
    \]
    In the case $a < \frac{1}{3}$, we have that $\beta_2(a) =
    (3a-1)(a+1) < 0$ and
    \[
    \varphi_1 = \frac{\beta_1(a) + \sqrt{\Delta}}{-\beta_2(a)} >
    \varphi_2 = \frac{\beta_1(a) - \sqrt{\Delta}}{-\beta_2(a)} > 0.
    \]
    Therefore, $\varphi \leq \varphi_2$. Thus,
    \[
    \varphi(x) \leq \frac{1 - \sqrt{1 -
        \frac{2-6a}{1+a}(2x)^{-\frac{2a}{1+a}}}}{1 - 3a} \leq
    \frac{2}{1+a}(2x)^{-\frac{2a}{1+a}}.
    \]
  \end{enumerate}
\end{proof}

As a corollary of the previous proof, we can get a series expansion in
the variable $Y=(2x)^{-\frac{2a}{1+a}}$. First we
re-state~(\ref{thmpsi0_2}):
\[
e^{2a\varphi(x)}-e^{(a-1) \varphi(x)} = (2x)^{-\frac{2a}{1+a}},
\]
and then we use the Taylor expansion of the exponential function about $0$:
\[
\sum_{n=1}^{\infty}\frac{(2a)^k - (a-1)^k}{k!} \varphi^k(x) =
\sum_{n=1}^{\infty} \frac{\beta_k(a)}{k!} \varphi^k(x) =
(2x)^{-\frac{2a}{1+a}},
\]
where $\beta_k(a) = (2a)^k-(a-1)^k$, $\beta_1(a) = 1+a$, $\beta_2(a) =
3a^2+2a-1$, $\beta_3(a) = 7a^3+3a^2-3a+1$. Then by using
Theorem~\ref{thm_LIT} we arrive at
\[
\psi(a,x) = \sum_{n=1}^{\infty}\frac{g_n(a)}{n!}Y^n,
\]
where
\[
g_n(a) = \frac{1}{\beta_1^n} \sum_{k=1}^{n-1}(-1)^kn^{(k)}
B_{n-1,k}\left(\frac{\beta_{2}}{2\beta_1},
\frac{\beta_3}{3\beta_1},\dots,
\frac{\beta_{n-k+1}}{(n-k+1)\beta_1}\right),\, n\geq 2,
\]
and $g_1(a) = \frac{1}{\beta_1} = \frac{1}{a+1}$, $n^{\bar k} =
n(n+1)\cdots(n+k-1)$ is the rising factorial and $B_{n,k}$ denote the
partial or incomplete exponential Bell polynomials.

Let us compute a few coefficients:
\begin{align*}
  g_1(a) & = \frac{1}{\beta_1} = \frac{1}{a+1}\\
  g_2(a) & = \frac{1}{\beta_1^2(a)}(-1) 2 B_{1,1}(\hat{\beta_1}) =
  -\frac{\beta_2(a)}{\beta_1^3(a)} = \frac{1-3a}{(a+1)^2}\\
  g_3(a) & = \frac{1}{\beta_1^3(a)} \left(-3B_{2,1}(\hat{\beta_1},
  \hat{\beta_2}) + 12 B_{2,2}(\hat{\beta_1})\right) \\ & =
  \frac{1}{\beta_1^3(a)}\left(-3\hat{\beta_2} + 12
  \hat{\beta_1}^2\right) = \frac{-\beta_3\beta_1 +
    3\beta_2^2}{\beta_1^5} \\ & = \frac{2(10a^2-7a+1)}{(a+1)^3}.
\end{align*}
Thus,
\[
\varphi(x)\thicksim g_1(a)Y + \frac{g_2(a)}{2}Y^2 +
\frac{g_3(a)}{6}Y^3 = \frac{1}{a+1}Y + \frac {1-3a}{2(a+1)^2}Y^2 +
\frac{10a^2-7a+1}{3(a+1)^3}Y^3,
\]
which implies
\begin{equation*}
  \psi_0(a,x) \thicksim \frac{\log(2x)}{1+a} +
  \frac{1}{a+1}(2x)^{-\frac{2a}{1+a}} +
  \frac{1-3a}{2(a+1)^2}(2x)^{-\frac{4a}{1+a}} +
  \frac{10a^2-7a+1}{3(a+1)^3}(2x)^{-\frac{6a}{1+a}}.
\end{equation*}

\section{Asymptotics of the branch $\psi_{-1}$ as $x\to 0^-$}\label{section10}

Using the same techniques as in the previous sections, we shall study
the branch $\psi_{-1}(a,x)$'s behavior near zero. We shall prove the
following theorem. Recall that the branch $\psi_{-1}(a,x)$ is defined
for $L_a\le x<0$.

\medskip

\begin{theorem}\label{thmpsi1}
  Let $0<a<1$. Then
  \begin{enumerate}\itemsep2mm
  \item $\displaystyle{\lim_{x\to 0^-}\left(\psi_{-1}(a,x) -
    \frac{1}{1-a}\log(-2x)\right) = 0.}$
  \item $\displaystyle{\psi_{-1}(a,x) \geq \frac{\log(-2x)}{1-a} -
    \log\left(1 - (-2x)^{\frac{2a}{1-a}}\right).}$
  \item For $-\frac{1}{2}
    \left(\frac{1-a}{6a+2}\right)^{\frac{1-a}{2a}}\leq x\leq 0 $ it
    holds
    \[
    \frac{\log(-2x)}{1-a} + \frac{1}{1-a}(-2x)^{\frac{2a}{1-a}} \leq
    \psi_{-1}(a,x) \leq \frac{\log(-2x)}{1-a} +
    \frac{2}{1-a}(-2x)^{\frac{2a}{1-a}}.
    \]
  \end{enumerate}
\end{theorem}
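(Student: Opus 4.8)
The plan is to mirror, with the obvious sign changes, the strategy used for $\psi_0$ in Theorem~\ref{thmpsi0}. Throughout write $y=\psi_{-1}(a,x)$ and recall that $y\le M_a<0$ and that $y\to-\infty$ as $x\to 0^-$, since $f(a,w)\to 0^-$ as $w\to-\infty$. The starting point is the defining identity $2x=e^{(1+a)y}-e^{(1-a)y}$, which I rewrite as
\[
-2x=e^{(1-a)y}\left(1-e^{2ay}\right),
\]
the factor $1-e^{2ay}$ lying in $(0,1)$ because $y<0$. Taking logarithms gives the exact relation
\[
\psi_{-1}(a,x)-\frac{\log(-2x)}{1-a}=-\frac{\log\!\left(1-e^{2ay}\right)}{1-a}=:\varphi(x)>0 .
\]
Since $y\to-\infty$ forces $e^{2ay}\to 0$, the right-hand side tends to $0$, which is precisely statement~(1).

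For statement~(2) I would exploit that $\varphi>0$ already yields $y>\tfrac{\log(-2x)}{1-a}$, hence $2ay>\tfrac{2a}{1-a}\log(-2x)=\log(-2x)^{\frac{2a}{1-a}}$ and therefore $e^{2ay}>(-2x)^{\frac{2a}{1-a}}$. Feeding this into the exact relation and using that $t\mapsto-\log(1-t)$ is increasing gives
\[
\varphi(x)=\frac{-\log\!\left(1-e^{2ay}\right)}{1-a}\ \ge\ \frac{-\log\!\left(1-(-2x)^{\frac{2a}{1-a}}\right)}{1-a}\ \ge\ -\log\!\left(1-(-2x)^{\frac{2a}{1-a}}\right),
\]
the last step because $\tfrac{1}{1-a}>1$ and the logarithm term is nonnegative (note $(-2x)^{\frac{2a}{1-a}}<e^{2ay}<1$ throughout $[L_a,0)$, so the bound is valid on the whole domain). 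Adding $\tfrac{\log(-2x)}{1-a}$ recovers the claimed lower bound.

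Statement~(3) is the substantial part and needs a sharper two-sided estimate. Substituting $y=\varphi+\tfrac{\log(-2x)}{1-a}$ into $-2x=e^{(1-a)y}-e^{(1+a)y}$ and dividing by $-2x>0$ produces the master equation
\[
e^{-2a\varphi}-e^{-(1+a)\varphi}=(-2x)^{\frac{2a}{1-a}}=:t ,
\]
the analogue of~\eqref{thmpsi0_2}. The plan is then to prove an auxiliary lemma, in the spirit of Lemma~\ref{thmpsi0_lem} via the Taylor expansion of $e^{-s}$ at $s=0$, asserting that with $b(a)=\tfrac{1}{3a+1}$ one has
\[
\frac{1-a}{2}\,\varphi\ \le\ e^{-2a\varphi}-e^{-(1+a)\varphi}\ \le\ (1-a)\,\varphi,\qquad 0\le\varphi\le b(a).
\]
Writing $G(\varphi)=e^{-2a\varphi}-e^{-(1+a)\varphi}$, one checks $G(0)=0$, $G'(0)=1-a$, and that the first sign change of $G'$ occurs at $\varphi^\ast=\tfrac{1}{1-a}\log\tfrac{1+a}{2a}\ge b(a)$, so $G$ is an increasing bijection of $[0,b(a)]$. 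Combining $G(\varphi)=t$ with the two bounds then gives exactly $\tfrac{t}{1-a}\le\varphi\le\tfrac{2t}{1-a}$, which upon adding $\tfrac{\log(-2x)}{1-a}$ is the assertion of~(3).

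The last thing to tie down — and the step I expect to be the main obstacle — is the interplay between the hypothesis on $x$ and the admissible range of the lemma, together with the avoidance of circular reasoning. The stated bound $x\ge-\tfrac12\big(\tfrac{1-a}{6a+2}\big)^{\frac{1-a}{2a}}$ is exactly equivalent to $t\le\tfrac{1-a}{2(3a+1)}=\tfrac{1-a}{2}b(a)\le G(b(a))$. Since $G$ is increasing on $[0,b(a)]$ and $\varphi(x)$ is the branch of $G^{-1}(t)$ continuously joining the value $0$ at $x=0^-$ (by~(1)), monotonicity of $G^{-1}$ forces $\varphi(x)\le b(a)$, so the lemma legitimately applies and no circularity arises — this is the exact parallel of how Lemma~\ref{4.1} feeds the choice $\epsilon=b(a)$ in the proof of Theorem~\ref{thmpsi0}. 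I expect the genuinely fiddly work to be verifying the two inequalities of the lemma on the \emph{whole} interval $[0,b(a)]$ uniformly in $a\in(0,1)$ — not merely to second order at the origin — and confirming $\varphi^\ast\ge b(a)$; both reduce to elementary but slightly delicate sign analyses of $G(\varphi)-(1-a)\varphi$ and $G(\varphi)-\tfrac{1-a}{2}\varphi$, each vanishing to first order at $\varphi=0$ with controlled curvature thereafter.
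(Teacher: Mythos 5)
Your proposal is correct and follows the same overall strategy as the paper: the same error function $\varphi(x)=\psi_{-1}(a,x)-\frac{\log(-2x)}{1-a}$, the same master equation (the paper's \eqref{est-13}), and the same translation of the hypothesis on $x$ into $t=(-2x)^{\frac{2a}{1-a}}\le\frac{1-a}{2(3a+1)}$. The differences are worth recording. For parts (1) and (2) you argue from the exact identity $(1-a)\varphi=-\log\bigl(1-e^{2ay}\bigr)$; this is cleaner than the paper's treatment --- which for (1) deduces $\varphi\to 0$ from the master equation by remarking that the terms on the left are bounded, an argument that as stated does not exclude $\varphi\to\infty$, since $G(s)=e^{-2as}-e^{-(1+a)s}\to 0$ also as $s\to\infty$ --- and it even yields a version of (2) that is stronger by the factor $\frac{1}{1-a}$. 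For part (3) your auxiliary lemma differs from the paper's Lemma~\ref{thmpsi1_lem}: the paper uses a linear upper bound and a quadratic lower bound for $G$, both valid for \emph{all} positive arguments, and then solves a quadratic inequality exactly as in the proof of Theorem~\ref{thmpsi0}; you instead use two linear bounds $\frac{1-a}{2}\,s\le G(s)\le (1-a)s$ on $[0,b(a)]$ with $b(a)=\frac{1}{3a+1}$, which avoids the quadratic-root discussion entirely. Your lemma is true --- it follows from the paper's quadratic bound restricted to $[0,b(a)]$, or from concavity of $G$ on $[0,2\varphi^{*}]$ together with the single endpoint inequality $G(b(a))\ge\frac{1-a}{2}b(a)$ --- but note that you have only located this verification, not carried it out. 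Finally, your explicit continuity argument forcing $\varphi\le b(a)$, so that the lemma applies and the second solution branch of $G(\varphi)=t$ is excluded, is a point where your write-up is more careful than the paper's: the published proof of (3) is a one-sentence reference to the proof of Theorem~\ref{thmpsi0}, where the exclusion of the larger root rests on the a priori bound $\varphi\le h(x)\le\epsilon$ coming from Lemma~\ref{4.1}, and no analogous a priori upper bound for the $\psi_{-1}$ case is ever stated there; your intermediate-value argument (if $\varphi$ exceeded $b(a)$ it would equal $b(a)$ at some $x_2$ strictly inside the range, giving $t(x_2)=G(b(a))\ge\frac{1-a}{2}b(a)=t(x_0)>t(x_2)$, a contradiction) supplies precisely the missing step.
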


In the proof of Theorem~\ref{thmpsi1} we shall need the following
elementary facts.

\begin{lemma}\label{thmpsi1_lem} For $x>0$ it holds
  \begin{enumerate}\itemsep2mm
  \item[$(i)$] $e^{-2ax} - e^{(-a-1)x} - (1-a)x \leq 0$ and
  \item[$(ii)$] $e^{-2ax} - e^{(-a-1)x} - (1-a)x - \frac{1}{2}(3a^2-2a-1)x^2 \geq 0$.
  \end{enumerate}
\end{lemma}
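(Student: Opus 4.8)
The plan is to treat both parts as elementary calculus inequalities, in the same spirit as the companion Lemma~\ref{thmpsi0_lem} for $\psi_0$, by introducing the auxiliary functions
\[
F(x)=e^{-2ax}-e^{(-a-1)x}-(1-a)x
\]
and
\[
G(x)=F(x)-\tfrac12(3a^2-2a-1)x^2,
\]
and analyzing the signs of their derivatives. A first useful observation is that $F(0)=F'(0)=0$; moreover, since $4a^2-(a+1)^2=3a^2-2a-1$, the quadratic correction term in $G$ is chosen exactly so that $G(0)=G'(0)=G''(0)=0$ as well. The essential feature I would exploit throughout is that, for $0<a<1$, one has $2a<a+1$, so the two exponential rates $2a$ and $a+1$ are ordered and comparisons between them reduce to elementary inequalities.

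For part $(i)$ the cleanest route I would take is the mean value theorem applied to the smooth function $s\mapsto e^{-sx}$ (with $x>0$ fixed) on the interval $[2a,a+1]$. This yields
\[
e^{-2ax}-e^{-(a+1)x}=(1-a)\,x\,e^{-\xi x}
\]
for some $\xi\in(2a,a+1)$. Since $\xi>0$ and $x>0$ force $e^{-\xi x}<1$, and since $(1-a)x>0$, the right-hand side is strictly less than $(1-a)x$, which is exactly the claim $F(x)<0$. Alternatively one can argue directly that $F'$ is negative on $(0,\infty)$: although $F''$ changes sign once, the data $F'(0)=0$ together with $\lim_{x\to\infty}F'(x)=-(1-a)<0$ pin down the sign; but the mean value argument avoids this case analysis entirely.

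For part $(ii)$ I would use a three-fold derivative cascade. Having recorded $G(0)=G'(0)=G''(0)=0$, it suffices to show $G'''(x)>0$ for all $x>0$, since then $G''$, $G'$ and finally $G$ are successively positive on $(0,\infty)$. Now
\[
G'''(x)=-8a^3 e^{-2ax}+(a+1)^3 e^{-(a+1)x},
\]
and $G'''(x)>0$ is equivalent to $\bigl(\tfrac{a+1}{2a}\bigr)^{3}>e^{(a-1)x}$. The left-hand side exceeds $1$ because $a+1>2a$, while the right-hand side is at most $1$ because $a-1<0$ and $x>0$; hence the inequality holds for every $x>0$, settling $(ii)$.

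The main obstacle is entirely in part $(i)$: the naive ``bound a single derivative'' strategy fails because $F''$ genuinely changes sign on $(0,\infty)$, so $F$ is neither convex nor concave there, and one cannot conclude $F\le 0$ from second-order information alone. The mean value identity is what turns this into a one-line comparison; the only points to verify carefully are the ordering $2a<a+1$ (valid precisely because $a<1$) and the positivity of $\xi$. Part $(ii)$, by contrast, is frictionless once the coefficient $3a^2-2a-1$ is recognized as $4a^2-(a+1)^2=F''(0)$, which is exactly what forces the triple vanishing at the origin and lets the fixed-sign third derivative carry the argument.
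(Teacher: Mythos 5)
Your part $(i)$ is correct: the mean value theorem applied to $s\mapsto e^{-sx}$ on $[2a,a+1]$ gives $e^{-2ax}-e^{-(a+1)x}=(1-a)x\,e^{-\xi x}<(1-a)x$ for $x>0$, since $\xi>2a>0$. (For what it is worth, the paper offers no proof at all of this lemma -- it is stated as an ``elementary fact'' -- so there is no official argument to compare against; your MVT observation is a perfectly good way to dispose of $(i)$.)

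Part $(ii)$, however, contains a genuine error. You claim $G'''(x)>0$ for all $x>0$, asserting that this is equivalent to $\bigl(\tfrac{a+1}{2a}\bigr)^{3}>e^{(a-1)x}$. The exponent has the wrong sign: dividing $(a+1)^3e^{-(a+1)x}>8a^3e^{-2ax}$ by $e^{-(a+1)x}$ gives the condition
\[
\left(\frac{a+1}{2a}\right)^{3}>e^{(1-a)x},
\]
and since $1-a>0$ the right-hand side grows without bound, so the inequality fails for all $x>x_0:=\frac{3}{1-a}\log\frac{a+1}{2a}$. Concretely, for $a=\tfrac12$ one has $G'''(x)=-e^{-x}+\tfrac{27}{8}e^{-3x/2}$, which is negative for $x>6\log\tfrac32\approx 2.43$; this is the same slow-versus-fast decay effect you exploited in part $(i)$, where the $e^{-2ax}$ term (here carrying a negative coefficient) eventually dominates. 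So the three-fold cascade collapses at its last step. The lemma is nevertheless true, and your setup $G(0)=G'(0)=G''(0)=0$ can be salvaged by stopping the cascade at $G''$: on $(0,x_0]$ one has $G'''>0$, so $G''$ increases from $G''(0)=0$ and is positive; on $[x_0,\infty)$ one has $G'''<0$, so $G''$ decreases, but it decreases toward its limit $\lim_{x\to\infty}G''(x)=-(3a^2-2a-1)=(1-a)(1+3a)>0$ and hence stays above that positive value. Thus $G''\ge 0$ on $[0,\infty)$, and two integrations give $G\ge 0$. Alternatively, $G''\ge 0$ follows in one line from the identity $(a+1)^2=4a^2+(1-a)(1+3a)$ together with the bounds $e^{-(a+1)x}\le e^{-2ax}$ and $e^{-(a+1)x}\le 1$:
\[
(a+1)^2e^{-(a+1)x}\le 4a^2e^{-2ax}+(1-a)(1+3a),
\]
which is exactly the statement $G''(x)\ge 0$.
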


\begin{proof}[Proof of Theorem~\ref{thmpsi1}.]
  (1) If $y=\psi_{-1}(x)$, then
  \begin{equation}\label{psi-1_1}
    x = \sinh(ay)e^y = \frac{e^{(1+a)y} - e^{(1-a)y}}{2} > -\frac{e^{(1-a)y}}{2},
  \end{equation}
  and therefore we get
  \begin{equation}\label{psi-1_2}
    \psi_{-1}(a,x) = y > \frac{1}{1-a}\log(-2x).
  \end{equation}
  Then, let $\varphi(x) = \psi_{-1}(a,x) - \frac{\log(-2x)}{1-a}$. We
  know that $\varphi > 0$. From~(\ref{psi-1_1}) it follows
  \[
  e^{(1+a)\left(\varphi(x) + \frac{\log(-2x)}{1-a}\right)} -
  e^{(1-a)\left(\varphi(x) + \frac{\log(-2x)}{1-a}\right)} = 2x,
  \]
  and after rearranging we have
  \begin{equation}\label{est-13}
    e^{-2a\varphi(x)} - e^{(-a-1)\varphi(x)} = (-2x)^{\frac{2a}{1-a}}.
  \end{equation}
  Each term of the left side is bounded and since the right hand side
  tends to zero, when $x\to 0^-$, then so does $\varphi(x)$.

  \medskip

  (2) Using the definition of $\psi_{-1}$, and the fact that the
  function $\sinh$ is increasing and by \eqref{psi-1_2} we have
  \begin{align*}
    &e^{-\psi_{-1}(a,x) + \frac{\log(-2x)}{1-a}} =
    e^{-\psi_{-1}(x)}(-2x)^{\frac{1}{1-a}} =
    \frac{(-2x)^{\frac{1}{1-a}} \sinh(a\psi_{-1}(a,x))}{x}\\ & \leq
    \frac{(-2x)^{\frac{1}{1-a}}
      \sinh\left(a\frac{\log(-2x)}{1-a}\right)}{x} =
    \frac{(-2x)^{\frac{1+a}{1-a}} - (-2x)^{\frac{1-a}{1-a}}}{2x} = 1 -
    (-2x)^{\frac{2a}{1-a}}.
  \end{align*}

  \medskip

  (3) Let $\varphi(x) = \psi_{-1}(x)-\frac{\log(-2x)}{1-a}$. This
  proof follows closely that of Theorem~\ref{thmpsi0}, but instead of
  relation~\eqref{thmpsi0_2} we use \eqref{est-13}, and
  Lemma~\ref{thmpsi0_lem} is replaced by Lemma~\ref{thmpsi1_lem}.
\end{proof}

Proceeding in a similar manner as in Section~\ref{section9}, we get an
expansion of $\psi_{-1}$ in terms of $Z=(-2x)^{\frac {2a}{1-a}}$:
\begin{multline*}
  \psi_{-1}(a,x) \thicksim \frac{\log(-2x)}{1-a} +
  \frac{1}{1-a}(-2x)^{\frac{2a}{1-a}} +
  \frac{1+3a}{2(1-a)^2}(-2x)^{\frac{4a}{1-a}} + \\
  \frac{10a^2+7a+1}{3(1-a)^3}(-2x)^{\frac{6a}{1-a}}.
\end{multline*}


\begin{thebibliography}{10}

\bibitem{AbramowitzStegun:1972}
{\sc Abramowitz, M., and Stegun, I.~A.}
\newblock {\em Handbook of mathematical functions with formulas, graphs and
  mathematical tables}.
\newblock Dover, New York, 1972.

\bibitem{AhmiaBelbachir:2018}
{\sc Ahmia, M., and Belbachir, H.}
\newblock {$p,q$}-analogue of a linear transformation preserving log-convexity.
\newblock {\em Indian J. Pure Appl. Math. 49}, 3 (2018), 549--557.

\bibitem{AhmiaBelbachir:2019}
{\sc Ahmia, M., and Belbachir, H.}
\newblock Preserving log-concavity for {$p,q$}-binomial coefficients.
\newblock {\em Discrete Math. Algorithms Appl. 11}, 2 (2019), 1950017.

\bibitem{Baker:1975}
{\sc Baker, A.}
\newblock {\em Transcendental number theory}.
\newblock Cambridge University Press, Cambridge, 1975.

\bibitem{Baxter:1973}
{\sc Baxter, R.~J.}
\newblock Potts model at the critical temperature.
\newblock {\em J. Phys. C: Solid State Phys. 6}, 23 (1973), L445.

\bibitem{Charalambides:2002}
{\sc Charalambides, C.~A.}
\newblock {\em Enumerative combinatorics}.
\newblock Chapman\&Hall/CRC, Boca Raton, Florida, 2002.

\bibitem{Cipra:1987}
{\sc Cipra, B.~A.}
\newblock An introduction to the {Ising} model.
\newblock {\em Amer. Math. Monthly 94\/} (1987), 937--959.

\bibitem{Corcino:2008}
{\sc Corcino, R.~B.}
\newblock On {$p,q$}-binomial coefficients.
\newblock {\em INTEGERS: Electronic journal of combinatorial number theory 8\/}
  (2008), {\#}A29.

\bibitem{Corcino:2015}
{\sc Corcino, R.~B., Gonzales, K. J.~M., Loquias, M. J.~C., and Tan, E.~L.}
\newblock Dually weighted {Stirling}-type sequences.
\newblock {\em Eur. J. Combin. 43\/} (2015), 55--67.

\bibitem{Corless:1996}
{\sc Corless, R.~M., Gonnet, G.~H., Hare, D. E.~G., Jeffrey, D.~J., and Knuth,
  D.~E.}
\newblock On the {Lambert} {$W$} function.
\newblock {\em Adv. Comput. Math. 5}, 4 (1996), 329--359.

\bibitem{DuminilCopin:2022}
{\sc Duminil-Copin, H.}
\newblock 100 years of the (critical) {Ising} model on the hypercubic lattice.
\newblock \url{https://arxiv.org/abs/2208.00864}, 2022.
\newblock To appear in Proceedings of the ICM 2022.

\bibitem{EdwardsAnderson:1975}
{\sc Edwards, S.~F., and Anderson, P.~W.}
\newblock Theory of spin glasses.
\newblock {\em J. Phys. F: Met. Phys. 5\/} (1975), 965--974.

\bibitem{FerrenbergXuLandau:2018}
{\sc Ferrenberg, A.~M., Xu, J., and Landau, D.~P.}
\newblock Pushing the limits of {Monte} {Carlo} simulations for the
  three-dimensional {Ising} model.
\newblock {\em Phys. Rev. E 97\/} (2018), 043301.

\bibitem{Ising:1925}
{\sc Ising, E.}
\newblock Beitrag zur {Theorie} des {Ferromagnetismus}.
\newblock {\em Z. Physik 31\/} (1925), 253--258.

\bibitem{Kalugin:2011}
{\sc Kalugin, A.}
\newblock {\em Analytical properties of the {L}ambert {W} function}.
\newblock PhD thesis, The University of Western Ontario, London, Ontario,
  Canada, 2011.

\bibitem{Lenz:1920}
{\sc Lenz, W.}
\newblock Beitrag zum {Verst\"andnis} der magnetischen {Erscheinungen} in
  festen {K\"orpern}.
\newblock {\em Physik. Z. 21\/} (1920), 613--615.

\bibitem{LundowRosengren:2010}
{\sc Lundow, P.~H., and Rosengren, A.}
\newblock On the {$p,q$}-binomial distribution and the {Ising} model.
\newblock {\em Phil. Mag. 90}, 24 (2010), 3313--3353.

\bibitem{LundowRosengren:2013}
{\sc Lundow, P.~H., and Rosengren, A.}
\newblock The {$p,q$}-binomial distribution applied to the 5d {Ising} model.
\newblock {\em Phil. Mag. 93}, 14 (2013), 1755--1770.

\bibitem{McCoyWu:1973}
{\sc McCoy, B.~M., and Wu, T.~T.}
\newblock {\em The two-dimensional {Ising} model}.
\newblock Harvard University Press, 1973.

\bibitem{Okounkov:2022}
{\sc Okounkov, A.}
\newblock The {Ising} model in our dimension and our times.
\newblock \url{https://arxiv.org/abs/2207.03874}, 2022.
\newblock To appear in Proceedings of the ICM 2022.

\bibitem{Onsager:1944}
{\sc Onsager, L.}
\newblock Crystal statistics {I}. {A} two-dimensional model with an
  order-disorder transition.
\newblock {\em Phys. Rev. (2) 65\/} (1944), 117--149.

\bibitem{SherringtonKirkpatrick:1975}
{\sc Sherrington, D., and Kirkpatrick, S.}
\newblock Solvable model of a spin-glass.
\newblock {\em Phys. Rev. Lett. 35}, 26 (1975), 1792--1796.

\bibitem{Stanley:1968}
{\sc Stanley, H.~E.}
\newblock Dependence of critical properties of dimensionality of spins.
\newblock {\em Phys. Rev. Lett. 20}, 12 (1968), 589--592.

\bibitem{SuWang:2012}
{\sc Su, X.-T., and Wang, Y.}
\newblock Proof of a conjecture of {Lundow} and {Rosengren} on the bimodality
  of {$p,q$}-binomial coefficients.
\newblock {\em J. Math. Anal. Appl. 391}, 2 (2012), 653 -- 656.

\bibitem{WangLandau:2001}
{\sc Wang, F., and Landau, D.~P.}
\newblock Efficient, multiple-range random walk algorithm to calculate the
  density of states.
\newblock {\em Phys. Rev. Lett 86}, 10 (2001), 2050--2053.

\bibitem{Wolff:2001}
{\sc Wolff, U.}
\newblock Collective {Monte} {Carlo} for updating for spin systems.
\newblock {\em Phys. Rev. Lett. 62}, 4 (2001), 361--364.

\end{thebibliography}
\end{document}